\newtheorem{proposition}{\bfseries Proposition}
\newtheorem{example}{\bfseries Example}
\newtheorem{assumption}{\it Assumption}
\newtheorem{theorem}{\bfseries Theorem}
\newtheorem{lemma}{\bfseries Lemma}
\newtheorem{remark}{\bfseries Remark}
\newtheorem{problem}{\bfseries Problem}
\newtheorem{proof}{Proof}
\newcommand{\R}{\mathbb{R}}
\newcommand{\C}{\mathcal{C}}
\newcommand{\fot}{\frac{1}{2}}
\newcommand{\ka}{\kappa}
\newcommand{\sg}{\sigma}
\newcommand{\ta}{\theta}
\newcommand{\om}{\omega}
\newcommand{\be}{\beta}
\newcommand{\pa}{\partial}
\newcommand{\di}{{\rm d}}
\newcommand{\la}{\lambda}
\newcommand{\ga}{\gamma}
\newcommand{\ep}{\epsilon}
\renewcommand\normalsize{%
\@setfontsize\normalsize\@xpt\@xiipt
\abovedisplayskip 2.297\p@ \@plus2\p@ \@minus1\p@
\abovedisplayshortskip \z@ \@plus2\p@
\belowdisplayshortskip 2.297\p@ \@plus2\p@ \@minus1\p@
\belowdisplayskip \abovedisplayskip
\let\@listi\@listI}
\newif\ifdraft
\def\BibTeX{{\rm B\kern-.05em{\sc i\kern-.025em b}\kern-.08em
    T\kern-.1667em\lower.7ex\hbox{E}\kern-.125emX}}
\begin{document}
\title{Immersion and Invariance-based Disturbance Observer and Its Application to Safe Control}

\author{Yujie Wang  and Xiangru Xu, \IEEEmembership{Member, IEEE}
\thanks{Yujie Wang and Xiangru Xu are with the Department of Mechanical Engineering, University of Wisconsin-Madison,
        Madison, WI 53706, USA. (e-mail: 
yujie.wang@wisc.edu;  xiangru.xu@wisc.edu).}
}

\maketitle

\begin{abstract}
When the disturbance input matrix is nonlinear, existing disturbance observer design methods rely on the solvability of a partial differential equation or the existence of an output function with a uniformly well-defined
disturbance relative degree, which can pose significant limitations. This note introduces a systematic approach for designing an Immersion and Invariance-based Disturbance Observer (IIDOB) that circumvents these strong assumptions. The proposed IIDOB ensures the disturbance estimation error is globally uniformly ultimately bounded by approximately solving a partial differential equation while compensating for the approximation error. Furthermore, by integrating IIDOB into the framework of control barrier functions, a filter-based safe control design method for control-affine systems with disturbances is established where the filter is used to generate an alternative disturbance estimation signal with a known derivative. Sufficient conditions are established to guarantee the safety of the disturbed systems. Simulation results demonstrate the effectiveness of the proposed method. 
\end{abstract}

\begin{IEEEkeywords}
Safe control, control barrier functions, immersion and invariance, disturbance observer.
\end{IEEEkeywords}

\section{Introduction}
\label{sec:introduction}

Designing feedback controllers that guarantee the safety specification of a system has attracted significant attention in the past decades \cite{aswani2013provably,prajna2007framework,brunke2022safe,hewing2020learning}. Inspired by automotive safety applications, \cite{ames2014control,ames2016control,Xu2015ADHS} proposed reciprocal and zeroing Control Barrier Functions  (CBFs) that generalize previous barrier conditions to only require a single sub-level set to be controlled invariant. By including the CBF
condition in a convex Quadratic Program (QP), a CBF-QP-based controller is generated in real time and acts as a safety filter that modifies potentially unsafe control inputs in a minimally invasive fashion. Various robust CBF approaches have been
proposed for systems with model uncertainties and external
disturbances \cite{jankovic2018robust,nguyen2021robust,garg2021robust,buch2021robust}; however, most of these robust CBF methods consider the worst-case of disturbances, resulting in overly conservative control behaviors.

To reduce the adverse effects of disturbances/uncertainties on system performance, several works integrating disturbance/uncertainty estimation and compensation techniques into the CBF-QP framework have been proposed recently \cite{wang2022disturbance,dacs2022robust,dacs2023robust,alan2022disturbance}. In our previous work \cite{wang2022disturbance}, the Disturbance Observer (DOB) presented in \cite{chen2004disturbance} was incorporated into the CBF-QP framework for the first time. Compared with other robust control schemes, DOB-based control has two main advantages: (i) the DOB can be designed independently and added to a baseline controller to improve its robustness and disturbance attenuation capability; (ii) in the presence of disturbances/uncertainties, the nominal performance of the baseline controller can be recovered by the DOB-based controller \cite{chen2015disturbance,sariyildiz2019disturbance,li2014disturbance}.   

\begin{figure}[!t]
\centering
  \includegraphics[width=0.5\textwidth]{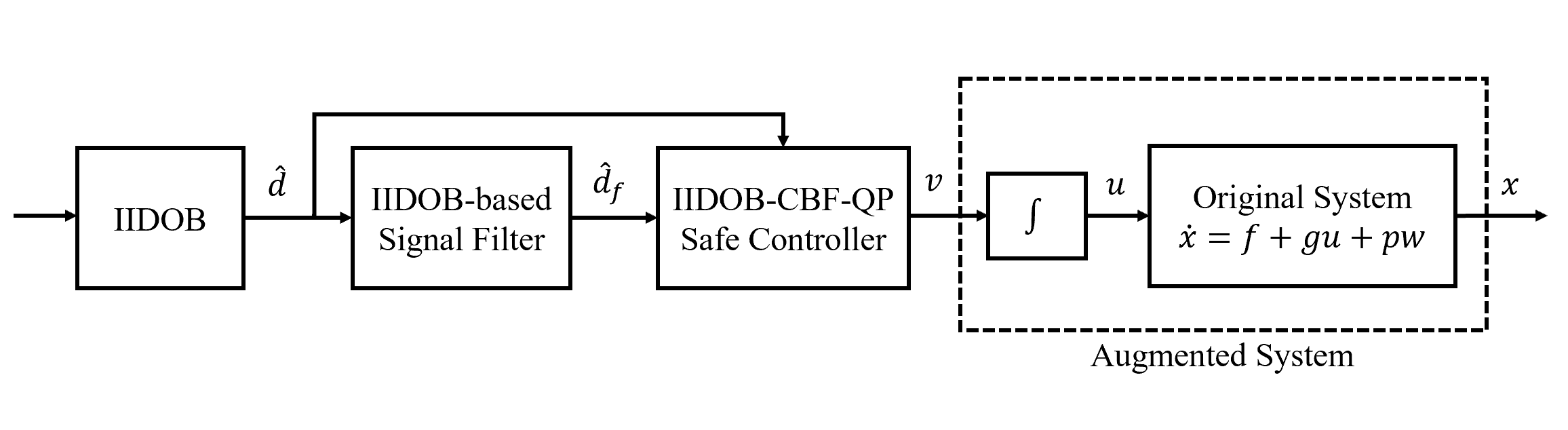}\vskip -3mm
\caption{Configuration of the proposed IIDOB-CBF-QP method that consists of three components: (i) an IIDOB used for disturbance estimation, (ii) a  filter that can generate an alternative disturbance estimation signal with a known derivative, and (iii) an IIDOB-CBF-QP-based safe controller that can ensure safety of the closed-loop system.}
\vskip -1mm
    \label{fig0}
\end{figure}

Nevertheless, the design of DOBs is non-trivial and highly
problem-specific. Specifically, designing a DOB requires the existence of two functions that can ensure the asymptotic stability of the error dynamics and the satisfaction of a partial differential equation (PDE) simultaneously (more details will be given in the next section) \cite{chen2015disturbance}. Fulfilling these two requirements is challenging, and 
existing methods rely on relatively strong assumptions, e.g., the disturbance relative degree is uniformly well-defined \cite{chen2004disturbance,chen1999nonlinear}. A systematic and computationally feasible method for constructing DOBs for generic nonlinear control-affine systems is still lacking.

The contribution of this note is twofold: 
(i) Inspired by the Immersion and Invariance (I\&I) technique \cite{astolfi2003immersion,astolfi2008nonlinear,astolfi2009globally}, we propose a systematic approach for designing I\&I-based Disturbance Observer (IIDOB) for general nonlinear control-affine systems without imposing the strong assumptions adopted by existing DOB design methods, such as the solvability of a PDE or the existence of an output function with a uniformly well-defined disturbance relative degree. 
By approximately solving the PDE and compensating for the approximation error, the proposed IIDOB ensures that the disturbance estimation error is globally Uniformly Ultimately Bounded (UUB).
(ii) We propose a filter-based IIDOB-CBF-QP safe control design approach for control-affine systems with disturbances (see Fig. \ref{fig0}). We design a filter to obtain an alternative disturbance estimation signal with a known derivative and provide sufficient conditions that ensure the safety of the disturbed system. 
The remainder of this note is organized as follows: the background and the problem statement are provided in Section \ref{sec:preliminary}; the proposed IIDOB is presented in Section \ref{sec:dob}; the IIDOB-CBF-QP-based safe control strategy is provided in Section \ref{sec:dobcbf}; numerical simulation results are provided 
in Section \ref{sec:simulation}; and finally, the conclusion is drawn in Section \ref{sec:conclusion}.

\subsubsection*{\textbf{Notation}}
For a given positive integer $n$, denote $[n]=\{1,2,\cdots,n\}$. For a column vector $x\in\R^{n}$ or a row vector $x\in\R^{1\times n}$, denote $x_i$ as the $i$-th entry of $x$ and $\|x\|$ as its 2-norm. Denote $I_{n}$ as an identity matrix with dimension $n\times n$. For a given matrix $A\in\R^{n\times m}$, $A_{ij}$ is the $(i,j)$-th entry of $A$, $A_j$ is the $j$-th column of $A$, and $\|A\|$ is its Frobenius norm. Denote ${\rm diag}[a_1,a_2,\cdots,a_n]\in\R^{n\times n}$ as a diagonal matrix with diagonal entries $a_1,a_2,\cdots,a_n\in\R$.
The gradient $\frac{\pa h}{\pa x} \in \R^{n\times 1}$ is considered as a row vector, where $x\in\R^n$ and $h:\R^n\to\R$ is a function with respect to $x$. For a function $f:\R^n \to \R^m$ with respect to $x\in\R^n$, $\frac{\pa f}{\pa x}$ denotes the Jacobian matrix whose $(i,j)$-th entry is $\frac{\pa f_i}{\pa x_j}$.

\section{Background and Problem Statement}
\label{sec:preliminary}

\subsection{Background}
\label{sec:motivation}

Consider a control-affine system  $\dot x = f(x)+g(x) u$,
where $x\in\R^n$ is the state, $u\in\R^m$ 
is the control input, and $f: \R^n\to\R^n$ and $g: \R^n\to\R^{n\times m}$ are known and locally Lipchitz continuous
function functions. Define a safe set $\mathcal{C}$ as  
\begin{equation}\label{setc}
    \mathcal{C} = \{ x \in \R^n : h(x) \geq 0\},
\end{equation}
where $h:\R^n\to\R$ is a sufficiently smooth function. The function $h$ is called a CBF of (input) relative degree 1
if  
$
 \sup_{u\in\R^m}  \left[ L_f h + L_{g} h u + \gamma h\right] \geq 0
$ holds
for all $x\in\R^n$, where $\gamma>0$ is a given positive constant, and $L_fh=\frac{\pa h}{\pa x}f$ and $L_gh=\frac{\pa h}{\pa x}g$ are Lie derivatives  \cite{Xu2015ADHS}. It was proven in  \cite{Xu2015ADHS} that any Lipschitz continuous controller $u\in \{ {\mathfrak u} : L_f h + L_{g} h {\mathfrak u} + \gamma h \geq 0\}$ will ensure the safety of the closed-loop system, i.e., the forward invariance of $\C$.

Now consider the following control-affine system with disturbances:
\begin{equation}
    \dot x = f(x)+g(x) u+p(x) w(t)\label{controlaffine}
\end{equation}
where $x\in\R^n$ is the state, $u\in\R^m$ is the control input, $w:\R_{\geq 0}\to\R^l$ is the disturbance, and $f:\R^n\to\R^n$, $g:\R^n\to\R^{n\times m}$, and $p:\R^n\to\R^{n\times l}$ are known functions. Provided that the disturbance $w$ is bounded, robust CBF-based methods can be adopted to ensure the safety of system \eqref{controlaffine} \cite{jankovic2018robust,nguyen2021robust}. In existing robust CBF-based methods, safety is achieved by sacrificing the nominal performance, as the worst-case of the disturbances is considered in the safe controller design. Therefore, trajectories of the closed-loop system will stay in a shrunk subset of the original safe set $\C$, implying that the performance of these controllers is conservative.

DOB is one of the most effective tools for estimating and compensating disturbances/uncertainties in nonlinear control design, and has been extensively applied to numerous systems \cite{chen2015disturbance,sariyildiz2019disturbance,li2014disturbance}. 
Our previous work \cite{wang2022disturbance} integrated DOBs into the CBF-QP framework, and proposed a DOB-CBF-QP controller with safety guarantees. However, designing DOBs for control-affine system \eqref{controlaffine} is non-trivial.

Suppose that $w$ is slowly time-varying, that is,  $\dot w\approx 0$ for any $ t\geq 0$. 
As shown in \cite{chen2015disturbance,chen1999nonlinear}, the DOB for system \eqref{controlaffine} has the following structure:
\begin{IEEEeqnarray}{rCl}
\IEEEyesnumber\label{chendoblaw}
\IEEEyessubnumber
    \hat w &=& z + q(x),\\
\IEEEyessubnumber
    \dot z &=& -l(x)p(x) z-l(x)
    [f(x)+g(x)u+p(x)q(x)],
\end{IEEEeqnarray}
where $\hat w$ is the disturbance estimation,
$z\in\R^l$ is the internal state of the DOB.
The function $l(x)$, known as the DOB gain, and the function $q(x)$ should be designed such that \cite{chen2015disturbance}
\begin{equation}\label{eqPDE}
    \frac{\pa q(x)}{\pa x}=l(x),
\end{equation}
and the error dynamics is globally asymptotically stable for any $x\in\R^n$:
\begin{equation}\label{eqerror}
    \dot e_w+l(x)p(x)e_w=0,
\end{equation}
where $e_w=w-\hat w$ is the disturbance estimation error. 

Designing $l$ and $q$ is a challenging and highly case-specific task in general \cite{chen2015disturbance}. Several methods have been proposed based on relatively strong assumptions. If $p$ has full column rank for any $x\in\R^n$, then one can select $q$ by solving the PDE $\frac{\pa q}{\pa x}=p^\dagger$, where $p^\dagger$ denotes the left inverse of $p$ \cite{mohammadi2017nonlinear}; however, when $n>1$, this PDE is generally unsolvable and even when solvable, its closed-form solution is hard to obtain. If the disturbance relative degree is uniformly well-defined with respect to an output function $s(x)$, an approach for designing $q$ is proposed in  \cite{chen2004disturbance,chen1999nonlinear}; however selecting such a function $s$ is challenging and its existence is not guaranteed (e.g., there may exist $x^*\in\R^n$ such that $p(x^*)=0$). A practically useful approach 
involves treating $pw$ as the total disturbance and assuming that $\frac{\di}{\di t}(pw)$ is bounded \cite{an2016disturbance,mohammadi2011disturbance}; however, this assumption is rather restrictive because $\frac{\di}{\di t}(pw)$ explicitly relies on $u$ and $x$. 
As will be shown in Section \ref{sec:dob}, we will provide a systematic approach for designing DOBs that avoids the issues of the aforementioned methods.

\subsection{Problem Statement}
Consider system \eqref{controlaffine} and the safe set defined in \eqref{setc}, 
where $h$ is a sufficiently smooth function. Recall that $g_j$ denotes the $j$-th column of $g$ for $j\in[m]$, and $p_i$ denotes the $i$-th column of $p$ for $i\in[l]$.  System \eqref{controlaffine} is said to have a vector Input Relative Degree (IRD) $\mathcal{I}=(\sigma_1,\sigma_2,\cdots,\sigma_m)$ at a given point $x_0\in\R^n$ if $L_{g_j}L_f^kh(x)=0$ for any $k\in[\sigma_j-2]$, $j\in[m]$, and for all $x$ in a neighborhood of $x_0$, and $L_{g_j}L_f^{\sg_j-1}h(x_0)\neq 0$ holds for any $j\in[m]$ \cite[Remark 5.1.1]{isidori1985nonlinear}. Similarly, system \eqref{controlaffine} is said to have a vector Disturbance Relative Degree (DRD) $\mathcal{D}=(\nu_1,\nu_2,\cdots,\nu_l)$ at a given point $x_0\in\R^n$ if $L_{p_j}L_f^kh(x)=0$ for any $k\in[\nu_j-2]$, $j\in[l]$, and for all $x$ in a neighborhood of $x_0$, and $L_{p_j}L_f^{\nu_j-1}h(x_0)\neq 0$ holds for any $j\in[l]$ \cite{yang2013nonlinear}. Note that because system \eqref{controlaffine} is multiple-input-single-output with $h$ as the output, the definitions of vector IRD and vector DRD above are slight modifications of those given in \cite{isidori1985nonlinear,yang2013nonlinear}. 

In this note, with a slight abuse of notation, we will call $r_I=\min \mathcal{I}$ and $r_D=\min \mathcal{D}$ as the minimum IRD and the minimum DRD of system \eqref{controlaffine} with respect to function $h$ at a given point $x_0\in\R^n$, respectively; that is, $r_I$ (or $r_D$) denotes the number of times $h$ has to be differentiated to have at least one component of $u$ (or $w$) explicitly appearing.

Next, a standard assumption for DOB design is given.
\begin{assumption}\label{assmp:d}
The disturbance $w$ and its derivative are bounded as $\|w\|\leq \om_0$ and $\|\dot w\|\leq \om_1$, where $\om_0$ and $\om_1$ are positive constants not necessarily known in DOB design.
\end{assumption}

The first problem investigated in this note is to design a disturbance estimation law to estimate the total disturbance 
\begin{equation}
    d(x,t)= p(x)w(t).\label{totaldisturb}
\end{equation}

\begin{problem}\label{prob1}
Consider system \eqref{controlaffine} with $f,g\in C^1$ and $p\in C^2$ and suppose Assumption \ref{assmp:d} holds. Design a DOB-based estimation law to estimate the total disturbance $d$ online.
\end{problem}

Using the DOB-based estimation of the total disturbance, the second problem investigated in this note is to design a feedback control law such that system \eqref{controlaffine} is safe.

\begin{problem}\label{prob2}
Consider system \eqref{controlaffine} with $f,g\in C^1$ and $p\in C^2$ and the safe set $\mathcal{C}$ defined in \eqref{setc}. Suppose that Assumption \ref{assmp:d} holds and $r_I=r_D$ for system \eqref{controlaffine} with respect to $h$. 
Given the DOB developed via solving Problem \ref{prob1}, design a feedback control law such that system \eqref{controlaffine} is safe, 
i.e., $h(x(t))\geq 0$ for any $t>0$ provided $h(x(0))> 0$.
\end{problem}

\begin{remark}
In Problem \ref{prob2}, if $r_I<r_D$, the disturbance can be directly decoupled from the system via state feedback control \cite{yang2013static}. The case $r_I>r_D$ will be explored in our future work. 
Note that we don't assume the minimum DRD of system \eqref{controlaffine} is uniformly well-defined as in \cite{chen2004disturbance,chen1999nonlinear}, i.e., there may exist $x_0\in\R^n$ such that $L_{p_j}L_f^{\nu_j-1}h(x_0)=0$ for any $j\in[l]$.
\end{remark}

\begin{remark}
In this note, we aim to estimate the total disturbance $d$ rather than the disturbance $w$ for two main reasons: first, since no assumption is imposed on $p$ except for $p\in C^2$, the disturbance $w$ may not be uniquely determined in general;  second, Problem 2 can be solved by using the information of $d$ only. 
\end{remark}

\section{IIDOB Design}
\label{sec:dob}
Inspired by the I\&I technique \cite{astolfi2003immersion,astolfi2008nonlinear,astolfi2009globally},
we propose an IIDOB design approach to solve Problem \ref{prob1} in this section. 

First, we augment system \eqref{controlaffine} with an additional integrator: 
\begin{IEEEeqnarray}{rCl}
\IEEEyesnumber\label{eqnsys}
\IEEEyessubnumber\label{eqnsys1}
\dot x&=& f(x)+g(x)u+p(x)w(t),\\
\IEEEyessubnumber\label{eqnsys2}
\dot u&=&v,
\end{IEEEeqnarray}
where $v$ denotes the auxiliary control input to be designed, and $u$ is considered as a state variable of the augmented system. The relationship between system \eqref{controlaffine} and the augmented system \eqref{eqnsys} is illustrated in Fig. \ref{fig0}. As will be shown in this section and Section \ref{sec:dobcbf}, the auxiliary control input $v$ will be used in the design of IIDOB, and it will be generated from solving the IIDOB-CBF-QP. The control input $u$ for the original system \eqref{controlaffine} will be obtained through integrating $v$.

Define a time-varying set $\mathcal{M}(t)=\{(x,\hat x,u)\in\R^n\times\R^n\times\R^m:\xi(t)+\beta(\hat x,x,u)-p(x)w(t)=0\}$,
where $\hat x$ denotes the state estimation and $\xi$, $\beta$ are known functions that will be all specified later. Define 
\begin{equation}
\hat d= \xi+\beta
\end{equation}
as the estimated total disturbance, and the disturbance estimation error as 
\begin{equation}
e_d=\hat d-d.\label{ed}
\end{equation}
It is clear that if the system trajectories are restricted to  $\mathcal{M}(t)$, the disturbance estimation is accurate.
We also define 
\begin{equation}
    z=\frac{\xi +\beta -d}{r},\label{eqz}
\end{equation}
where $r$ is the scaling factor governed by an adaptive law yet to be designed. It is clear that $e_d=rz$. Our IIDOB design will render  $e_d$ globally UUB \cite[Definition 4.6]{khalil2002nonlinear} by guaranteeing that $z$ is globally UUB and $r$ remains bounded. 
Note that $\dot z$, the time derivative of $z$, can be expressed as
\begin{IEEEeqnarray}{rCl}
    \hspace{-2mm}\dot z&=&-\frac{\dot r}{r}z+\frac{1}{r}\bigg(
    \dot\xi+\frac{\pa\beta}{\pa x}(f+gu+pw)+\frac{\pa\beta}{\pa u}v+\frac{\pa\beta}{\pa \hat x}\dot{\hat{x}}\nonumber\\
    \hspace{-2mm}&&-p\dot w- \sum_{i=1}^l\frac{\pa p_{i}}{\pa x}(f+gu+pw)w_i
    \bigg),\label{dotz0}
\end{IEEEeqnarray}
where $p_{i}$ denotes the $i$-th column of $p$, $i\in[l]$.
Define
\begin{IEEEeqnarray}{rCl}
\hspace{-5mm}\psi(x,u)\!=\!\frac{\eta}{2}\!\left[\!\|p\|^2\!\!+\!\! \sum_{i=1}^l\! \left(
    \left\|\frac{\pa p_{i}}{\pa x}(f\!+\!gu)\right\|^2\!\!\!\!+\!\left\|\frac{\pa p_{i}}{\pa x}p\right\|^2\right)\!\right]\!\!+\!\ga\label{psi}
\end{IEEEeqnarray}
where $\ga,\eta>0$ are tuning parameters and $\ga$ denotes the observer gain. 

If $\delta(x,u)\in\R^n$ is a solution to the following PDE:
\begin{equation}
    \frac{\pa \delta}{\pa x}=\psi I_{n},\label{pde3}
\end{equation}
then the DOB design becomes straightforward by following \cite{astolfi2003immersion}. Specifically, one can design the total disturbance estimation as $\hat d = \xi+\delta$, where $\xi$ is governed by $\dot\xi=-\frac{\pa\delta}{\pa x}(f+gu+\xi+\delta)-\frac{\pa\delta}{\pa u}v$. Invoking \eqref{eqnsys} and \eqref{ed}, one can see $\dot e_d=-\psi e_d-p\dot w-\sum_{i=1}^l\frac{\pa p_{i}}{\pa x}(f+gu+pw)w_i$. By selecting a candidate Lyapunov  function $V=\fot e_d^\top e_d$, one can easily verify that $\dot V=e_d^\top \left(-\psi e_d-p\dot w - \sum_{i=1}^l \frac{\pa p_i}{\pa x}(f+gu+pw)w_i \right)\leq -\psi \|e_d\|^2+\om_1\|p\|\|e_d\|+\sum_{i=1}^l\om_0\left\|\frac{\pa p_{i}}{\pa x}(f+gu)\right\|\|e_d\|+\sum_{i=1}^l\om_0^2\left\|\frac{\pa p_{i}}{\pa x}p\right\|\|e_d\|\leq -\ga \|e_d\|^2+\frac{1}{2\eta}(\om_1^2+l\om_0^2+l\om_0^4)$, which indicates that $e_d$ is globally UUB.
However, when $n>1$, solving \eqref{pde3} is extremely challenging in principle, and even a solution to \eqref{pde3} may not exist \cite{astolfi2009globally}. To tackle this issue, we will follow \cite{astolfi2009globally} to first ``approximately solve"  \eqref{pde3} and then use $\dot r$ to compensate for the approximation error.

Recall that $f,g\in C^1$ and $p\in C^2$. Then, $\psi \in C^1$, and it is easy to verify that there exist continuous functions   $\delta_{ij}:\R^n\times\R^n\times\R^m\to\R$, $i,j\in[n]$, such that \cite{sonneveldt2010immersion,karagiannis2008observer}: 
\begin{IEEEeqnarray}{rCl}
    &&\psi(\hat x_{1}, \cdots,\hat x_{i-1}, x_{i}, \hat x_{i+1}, \cdots, \hat x_{n},u)-\psi(x,u)\nonumber\\
    &&=-\sum_{j=1}^n\delta_{ij}(x,\hat x,u)e_j,\label{hi}
\end{IEEEeqnarray}
where 
\begin{equation}
    e= \hat x-x\label{definitione}
\end{equation}
and $e_j$ denotes the $j$-th entry of $e$. 
The following theorem shows that our IIDOB design ensures the disturbance estimation error $e_d$ is globally UUB.

\begin{theorem}\label{theorem:dob}
Consider system \eqref{eqnsys} where $f,g\in C^1$ and $p\in C^2$, and suppose
Assumption \ref{assmp:d} holds. If the disturbance estimation law $\hat d$ is designed as:
\begin{IEEEeqnarray}{rCl}
\IEEEyesnumber\label{dob}
\IEEEyessubnumber\label{dob:hatd}
\hat d&=& \xi+\beta,\\
 \IEEEyessubnumber\label{designk}
 \Lambda&= &(k_1\!+\!k_2r^2)I_{n}\! + \!\frac{c r^2}{2}{\rm diag}[\|\Delta_1\|^2,\cdots,\|\Delta_n\|^2],\\
\IEEEyessubnumber\label{dob:beta}
\beta&=&\begin{bmatrix}
    \int_{0}^{x_1} \psi(\tau,\hat x_2,\hat x_3,\cdots,\hat x_n,u){\rm d}\tau\\
    \int_{0}^{x_2} \psi(\hat x_1,\tau,\hat x_3,\cdots,\hat x_n,u){\rm d}\tau\\
    \vdots\\
    \int_{0}^{x_n} \psi(\hat x_1,\hat x_2,\cdots,\hat x_{n-1}, \tau,u){\rm d}\tau
    \end{bmatrix},\\
\IEEEyessubnumber\label{dob:hatx}
\dot{\hat{x}}&=&f+gu+\hat d-\Lambda e,\\
\IEEEyessubnumber\label{dob:xi}
\dot\xi&=&-\frac{\pa\beta}{\pa x}(f+gu+\hat d)-\frac{\pa\beta}{\pa u}v-\frac{\pa\beta}{\pa \hat x}\dot{\hat{x}},\\
\IEEEyessubnumber\label{dob:r}
\dot r&=&-\theta(r\!-\!1)\!+\!\frac{c r}{2}\!\sum_{j=1}^n\!e_j^2\|\Delta_j\|^2,\ r(0)>1,
\end{IEEEeqnarray}
where $\psi$ is defined in \eqref{psi}, $e$ is defined in \eqref{definitione}, $\ga,c,\ta>0$ are positive constants satisfying $\ga>\frac{n}{2c}+\ta$, $\Delta_j=\text{diag}[\delta_{1j}, \delta_{2j},\cdots, \delta_{nj}]\in\R^{n\times n}$ with $\delta_{ij}$ defined in \eqref{hi} for $i,j\in[n]$, 
and $k_1,k_2>0$ are positive constants satisfying $k_2>\frac{1}{4\ga-2n/c-4\ta}$,
then $e_d$ is globally UUB.
\end{theorem}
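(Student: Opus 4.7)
The plan is to introduce the augmented error state $(z, e, r-1)$, show it is globally UUB via a single composite Lyapunov function, and then conclude via $e_d = rz$ once $r$ is shown to stay bounded. I would proceed in three steps.

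\emph{Step 1 (Error dynamics and the key PDE residual identity).} First I would verify $r(t) \geq 1$ for all $t$: since $\frac{cr}{2}\sum_j e_j^2\|\Delta_j\|^2 \geq 0$, \eqref{dob:r} gives $\dot r \geq -\theta(r-1)$, so $r$ cannot descend below $1$ from $r(0) > 1$. Subtracting \eqref{eqnsys1} from \eqref{dob:hatx} gives the clean equation $\dot e = rz - \Lambda e$. Substituting the designed $\dot\xi$ in \eqref{dob:xi} into \eqref{dotz0} cancels the $v$, $\dot{\hat x}$, and $\frac{\pa\beta}{\pa x}(f+gu)$ pieces, leaving
\begin{equation*}
\dot z = -\tfrac{\dot r}{r} z - \tfrac{\pa\beta}{\pa x}z - \tfrac{1}{r}\Big(p\dot w + \sum_{i=1}^l\tfrac{\pa p_i}{\pa x}(f+gu+pw)w_i\Big).
\end{equation*}
Because of the triangular integration form of $\beta$ in \eqref{dob:beta}, $\tfrac{\pa\beta}{\pa x}$ is diagonal with $(i,i)$-entry $\psi(\hat x_1,\dots,\hat x_{i-1},x_i,\hat x_{i+1},\dots,\hat x_n,u)$, which by \eqref{hi} equals $\psi(x,u) - \sum_j \delta_{ij}e_j$. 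This yields the key PDE-residual identity $\tfrac{\pa\beta}{\pa x} = \psi(x,u)I_n - \sum_{j=1}^n e_j\Delta_j$, which quantifies the error from only approximately solving \eqref{pde3}.

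\emph{Step 2 (Lyapunov bookkeeping).} Take $V = \tfrac{1}{2}\|z\|^2 + \tfrac{1}{2}\|e\|^2 + \tfrac{1}{2}(r-1)^2$ and handle four kinds of cross terms in $\dot V$: (a) the disturbance-driven term $-\tfrac{1}{r}z^\top[p\dot w + \sum_i \tfrac{\pa p_i}{\pa x}(f+gu+pw)w_i]$ is split into three pieces, each bounded by Young's inequality with weight $\eta$ and Assumption \ref{assmp:d}; the combined $\|z\|^2$ coefficient is exactly $\psi - \gamma$, cancelling $-\psi\|z\|^2$ and leaving a bounded residual $\tfrac{\om_1^2+l\om_0^2+l\om_0^4}{2\eta r^2}$; (b) the PDE-residual term is bounded by Young (weight $c$) as $\sum_j e_j z^\top \Delta_j z \leq \tfrac{c}{2}\sum_j e_j^2\|\Delta_j\|^2\|z\|^2 + \tfrac{n}{2c}\|z\|^2$, and the first piece \emph{cancels exactly} with $-\tfrac{c}{2}\sum_j e_j^2\|\Delta_j\|^2\|z\|^2$ produced by expanding $-\tfrac{\dot r}{r}\|z\|^2$ via \eqref{dob:r}; (c) the coupling $re^\top z$ from $e^\top \dot e$ is bounded by Young (weight $k_2$) as $k_2 r^2\|e\|^2 + \tfrac{1}{4k_2}\|z\|^2$, the $\|e\|^2$ piece being absorbed by $-(k_1+k_2 r^2)\|e\|^2$ from $-e^\top\Lambda e$; (d) the remaining $-\tfrac{c r^2}{2}\sum_j e_j^2\|\Delta_j\|^2$ from $-e^\top\Lambda e$ combines with the $(r-1)\tfrac{cr}{2}\sum_j e_j^2\|\Delta_j\|^2$ from $(r-1)\dot r$ to give $-\tfrac{cr}{2}\sum_j e_j^2\|\Delta_j\|^2 \leq 0$ since $r\geq 1$.

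\emph{Step 3 (UUB conclusion).} Collecting bounds and using $r\geq 1$ so $\theta(r-1)/r \leq \theta$,
\begin{equation*}
\dot V \leq -\Big(\gamma - \tfrac{n}{2c} - \theta - \tfrac{1}{4k_2}\Big)\|z\|^2 - k_1\|e\|^2 - \theta(r-1)^2 + \tfrac{\om_1^2+l\om_0^2+l\om_0^4}{2\eta}.
\end{equation*}
The stated gain conditions $\gamma > \tfrac{n}{2c}+\theta$ and $k_2 > \tfrac{1}{4\gamma-2n/c-4\theta}$ make the leading coefficient strictly positive, giving $\dot V \leq -\alpha V + \mu$ for some $\alpha,\mu > 0$. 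Hence $V$ is globally UUB, forcing $r$ to remain bounded and $\|z\|$ to be globally UUB; since $e_d = rz$, $e_d$ is globally UUB.

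\emph{Main obstacle.} The crux is the PDE residual $\sum_j e_j z^\top \Delta_j z$ in the $z$-dynamics: after the natural Young split it produces a $\sum_j e_j^2\|\Delta_j\|^2$ term multiplied by $\|z\|^2$, which no linear damping in $z$ or $e$ can absorb. The simultaneous choice of the $r^2$-weighted diagonal in $\Lambda$ and the matching positive term in $\dot r$ is carefully engineered so that $-\tfrac{\dot r}{r}\|z\|^2$ supplies exactly the missing negative contribution $-\tfrac{c}{2}\sum_j e_j^2\|\Delta_j\|^2\|z\|^2$. Tracking this triple cancellation across the $z$, $e$, and $r$ components of $V$ is the principal technical hurdle; the gain inequalities $\gamma > n/(2c)+\theta$ and $k_2 > 1/(4\gamma-2n/c-4\theta)$ arise precisely from the residual budgets $\tfrac{n}{2c}\|z\|^2$ and $\tfrac{1}{4k_2}\|z\|^2$ that remain after the cancellations.
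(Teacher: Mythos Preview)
Your proof is correct and shares all the essential ingredients with the paper's: the same error dynamics for $z$ and $e$, the same PDE-residual identity $\tfrac{\partial\beta}{\partial x}=\psi I_n-\sum_j e_j\Delta_j$, the same Young-inequality bookkeeping for the disturbance terms against $\psi-\gamma$, and the same key cancellation of $\tfrac{c}{2}\sum_j e_j^2\|\Delta_j\|^2\|z\|^2$ via the $-\dot r/r$ contribution.

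The one structural difference worth noting is that the paper proceeds in two stages: it first uses $V=\tfrac12\|z\|^2$ alone to obtain $\dot V\le -\kappa\|z\|^2+\omega$ and hence an explicit decaying bound $\|z\|\le\varrho_z(t)$, and only then introduces the augmented function $W=V+\tfrac12\|e\|^2+\tfrac12 r^2$ to bound $r$ by $\varrho_r(t)$. You instead use a single composite function with $\tfrac12(r-1)^2$ in place of $\tfrac12 r^2$. Your choice makes the cancellation in your step (d) come out to the clean $-\tfrac{cr}{2}\sum_j e_j^2\|\Delta_j\|^2\le 0$ and avoids the extra additive $\theta/2$ that the paper incurs when bounding $-\theta r(r-1)\le -\tfrac{\theta}{2}r^2+\tfrac{\theta}{2}$. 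On the other hand, the paper's two-stage analysis yields \emph{separate} explicit bounds $\varrho_z$ on $\|z\|$ (independent of $k_1,k_2$) and $\varrho_r$ on $r$; these individual bounds are reused downstream in Lemma~\ref{lemma:filteriidob} and Theorem~\ref{theorem:cbf}, which your single-function argument would not directly supply.
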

\begin{proof}
Recall that $e_d=rz$. To prove that $e_d$ is globally UUB, we will first show $z$ is globally UUB, and then $r$ is bounded. 

Substituting \eqref{dob:xi} into \eqref{dotz0} yields
\begin{IEEEeqnarray}{rCl}
    \hspace{-6mm}\dot z=-\frac{\dot r}{r}z-\frac{\pa\beta}{\pa x}z-\frac{1}{r}\!\left(p\dot w+ \sum_{i=1}^l\!\frac{\pa p_{i}}{\pa x}(f\!+\!gu\!+\!pw)w_i\right)\!.\label{dotz1n}
\end{IEEEeqnarray}
Recall that $\psi\in C^1$. According to the fundamental theorem of calculus, 
one can see that 
\begin{IEEEeqnarray}{rCl}
    \frac{\pa \beta}{\pa x}&=&
    {\rm diag}[\psi(x_1,\!\hat x_2,\!\cdots,\!\hat x_n,\!u),\psi(\hat x_1,\!x_2,\!\hat x_3,\!\cdots,\!\hat x_n,\!u),\nonumber\\
    &&\cdots,\psi(\hat x_1,\!\hat x_2,\!\cdots,\!\hat x_{n-1},\!x_n,\!u)].\label{partialbetahat}
\end{IEEEeqnarray}
Define $e_\psi=\left\|\psi(x,u)I_{n}-\frac{\pa\beta}{\pa x}\right\|$ as the ``approximation error" induced by approximately solving \eqref{pde3} using $\be$ designed in \eqref{dob:beta}.
Intuitively, from \eqref{partialbetahat} one can see that if $\hat x$ is very close to $x$, $e_\psi$ would be negligible. Note that the influence of $e_\psi$ will be eliminated by $\dot r$ as shown in the following analysis.

Then, substituting \eqref{hi} into \eqref{partialbetahat} yields
\begin{IEEEeqnarray}{rCl}
    \frac{\pa\beta}{\pa x} =\psi(x,u)I_{n} -\sum_{j=1}^n\Delta_j e_j,\label{pabetaapprox}
\end{IEEEeqnarray}
and substituting \eqref{psi} and \eqref{pabetaapprox} into \eqref{dotz1n} yields
\begin{IEEEeqnarray}{rCl}
   \dot z&=&\!-\frac{\dot r}{r}z\!-\!\ga z\!- \!\frac{\eta}{2}\!\left[\sum_{i=1}^l\!\left(
    \left\|\!\frac{\pa p_{i}}{\pa x}(f\!+\!gu)\!\right\|^2\!\!\!+\left\|\frac{\pa p_{i}}{\pa x}p\right\|^2\right)\!\!+\!\|p\|^2\!\right]\!z\nonumber\\
    &&+\sum_{j=1}^n\!\Delta_je_j z\!+\!\frac{1}{r}\left(\!-p\dot w- \!\sum_{i=1}^l\frac{\pa p_{i}}{\pa x}(f\!+\!gu\!+\!pw)w_i\!\right)\!\!.\label{dotz2}
\end{IEEEeqnarray}
From \eqref{dob:r} one can easily verify that $r\geq 1$ for any $t>0$ because the set $\{r:r\geq 1\}$ is invariant. Substituting  \eqref{dob:r} into \eqref{dotz2} gives
\begin{IEEEeqnarray}{rCl}
    \dot z&=& \theta\frac{r-1}{r}z\!-\!\frac{c}{2}\sum_{j=1}^ne_j^2\|\Delta_j\|^2z\!-\! \bigg[\frac{\eta}{2}\sum_{i=1}^l\bigg(
    \left\|\frac{\pa p_{i}}{\pa x}(f+gu)\right\|^2\nonumber\\
   &&+\left\|\frac{\pa p_{i}}{\pa x}p\right\|^2\bigg)+\frac{\eta}{2}\|p\|^2-\sum_{j=1}^n\Delta_je_j\bigg]z+\frac{1}{r}\bigg(-p\dot w\nonumber\\
   &&- \sum_{i=1}^l\frac{\pa p_{i}}{\pa x}(f+gu)w_i-\sum_{i=1}^l\frac{\pa p_{i}}{\pa x}pw w_i
    \bigg)-\ga z.\label{dotz3}
\end{IEEEeqnarray}
Meanwhile, subtracting \eqref{eqnsys} from \eqref{dob:hatx} yields
\begin{IEEEeqnarray}{rCl}
    \hspace{-6mm}\dot e=\hat d-d-\Lambda e=rz-\Lambda e.\label{dote}
\end{IEEEeqnarray}
Next, we prove that $z$ is globally UUB. Define a candidate Lyapunov function as $V=\frac{1}{2}z^\top z$, whose time derivative is
\begingroup
\allowdisplaybreaks
\begin{IEEEeqnarray}{rCl}
    \dot V&\stackrel{\eqref{dotz3}}{=}&  \theta\frac{r-1}{r}\|z\|^2-\frac{c}{2}\sum_{j=1}^ne_j^2\|\Delta_j\|^2\|z\|^2+z^\top\sum_{j=1}^n\Delta_je_jz \nonumber\\
    &&-\frac{\eta}{2}\left[
   \sum_{i=1}^l\left(
    \left\|\frac{\pa p_{i}}{\pa x}(f\!+\!gu)\right\|^2\!\!\!+\!\left\|\frac{\pa p_{i}}{\pa x}p\right\|^2\right)\!+\!\|p\|^2
   \right]\|z\|^2\nonumber\\
   &&+\frac{z^\top}{r}\left(-p\dot w- \sum_{i=1}^l\frac{\pa p_{i}}{\pa x}(f+gu+pw)w_i\right)-\ga\|z\|^2\nonumber\\
   &\leq& \theta\frac{r\!-\!1}{r}\|z\|^2\!-\!\frac{c}{2}\!\sum_{j=1}^n \! e_j^2\|\Delta_j\|^2\|z\|^2\!+\!\sum_{j=1}^n\|\Delta_j\| |e_j|\|z\|^2 \nonumber\\
    &&-\frac{\eta}{2}\left[
   \sum_{i=1}^l\!\left(
    \left\|\frac{\pa p_{i}}{\pa x}(f\!+\!gu)\right\|^2\!\!\!+\!\!\left\|\frac{\pa p_{i}}{\pa x}p\right\|^2\right)\!+\!\|p\|^2
   \right]\!\!\|z\|^2\nonumber\\
   &&-\ga\|z\|^2+\frac{\|z\|}{r}\bigg(\|p\|\om_1+ \sum_{i=1}^l\!\left\|\frac{\pa p_{i}}{\pa x}(f+gu)\right\|\om_0\nonumber\\
   &&+\sum_{i=1}^l\left\|\frac{\pa p_{i}}{\pa x}p\right\|\om_0^2\bigg)\nonumber\\
    &\leq&-\left(\ga-\ta-\frac{n}{2c}\right)\|z\|^2+\frac{1}{2\eta r^2}(\om_1^2+l\om_0^2+l\om_0^4)\nonumber\\
   &\leq&-\kappa \|z\|^2+\omega,\label{dotv0}
\end{IEEEeqnarray}
\endgroup
where 
\begin{IEEEeqnarray}{rCl}
\IEEEyesnumber\label{eqkappaomega}
\IEEEyessubnumber\label{eqkappa}
\kappa&=&\ga-\frac{n}{2c}-\ta>0,\\
\IEEEyessubnumber\label{eqomega}
\omega&=&\frac{1}{2\eta}(\om_1^2+l\om_0^2+l\om_0^4)>0,    
\end{IEEEeqnarray} 
the first and second inequality arise from Cauchy-Schwarz inequality, and the last inequality comes from the fact that $r\geq 1$. 
Therefore, one can see that 
\begin{align}\label{ineqz}
\|z\|\leq \sqrt{\|z(0)\|^2e^{-2\kappa t}+\frac{\omega}{\kappa}}=\varrho_z(t),    
\end{align}
which indicates that $z$ is UUB. Note that selecting a larger $\kappa$ will result in a smaller final bound of $\|z\|$. However, the convergence of $\|z\|$ does not imply the convergence of $e_d$ unless $r$ is bounded. To show the boundedness of $r$, we construct an augmented candidate Lyapunov function $W$ as $W=V+\fot e^\top e+\fot r^2$,
whose time derivative satisfies
\begin{IEEEeqnarray}{rCl}
    \dot W\!\!&\stackrel{\eqref{dote}}{\leq}&\!\! -\kappa\|z\|^2+\omega
    -e^\top \Lambda e+e^\top rz-\theta r(r-1)\nonumber\\
    &&+\frac{c r^2}{2}\sum_{j=1}^ne_j^2\|\Delta_j \|^2\nonumber\\
    \!\!&\stackrel{\eqref{designk}}{=}&\!\!-\!\kappa\|z\|^2\!+\!\omega\!-\!k_1\|e\|^2\!-\!k_2r^2 \|e\|^2\!-\!\theta r(r\!-\!1)\!+\!e^\top rz\nonumber\\
    &\leq& -\kappa\|z\|^2+\omega-k_1\|e\|^2-k_2r^2\|e\|^2+ k_2r^2\|e\|^2\nonumber\\
    &&+\frac{1}{4k_2}\|z\|^2-\frac{\theta}{2}r^2+\frac{\theta}{2}\nonumber\\
     &=&-\left(\kappa-\frac{1}{4k_2}\right)\|z\|^2-k_1\|e\|^2-\frac{\ta}{2}r^2+\left(\frac{\ta}{2}+\omega\right)\nonumber\\
    &\leq&-\chi W+\left(\frac{\ta}{2}+\omega\right),\label{dotw}
\end{IEEEeqnarray}
where $\chi=\min\left\{2\kappa-\frac{1}{2k_2},2k_1,\theta\right\}$. From \eqref{dotw} we have
\begin{IEEEeqnarray}{rCl}
    r\leq \sqrt{2W(0)e^{-\chi t}+\frac{\theta+2\omega}{\chi}}=\varrho_r(t).\label{rboundr}
\end{IEEEeqnarray}
Recall that $e_d=rz$. From \eqref{ineqz} and \eqref{rboundr}, 
it is easy to conclude that $e_d$ is globally UUB. This completes the proof.
\hfill $\Box$
\end{proof}

\begin{remark}
    From \eqref{ineqz} and \eqref{rboundr}, one can see that $\lim_{t\to\infty} \|e_d(t)\|\leq \sqrt{\frac{\omega(\theta+2\omega)}{\kappa\chi}}$, implying that the ultimate bound of $e_d$ can be made arbitrarily small by appropriately choosing the tuning parameters $\gamma,\theta,c,k_1,k_2$ in the IIDOB design. In practice, the selection of these parameters should reflect a trade-off between reducing the ultimate disturbance estimation error and achieving a desired transient performance.
\end{remark}
\begin{remark}
From \eqref{dob:beta} one can see that $\beta$ is obtained via calculating an (indefinite) integral, whose explicit form is hard to obtain in general. In practice, a numerical integration can be adopted to compute $\beta$. Moreover, since $\psi\in C^1$, $\frac{\pa\be}{\pa\hat x}$ can be computed using the Leibniz integral rule \cite{folland1999real} as $\left(\frac{\pa\be}{\pa\hat x}\right)_{ij}=\int_0^{x_i}\frac{\pa}{\pa \hat x_j}\psi(\hat x_1,\cdots,\hat x_{i-1},\tau,\hat x_{i+1},\cdots,\hat x_n)\di \tau$,
where $\left(\frac{\pa\be}{\pa\hat x}\right)_{ij}$ denotes the $ij$-th entry of $\frac{\pa\be}{\pa\hat x}$.
\end{remark}

\begin{remark}
The IIDOB design method proposed in this work can be seen as an extension of the I\&I estimator design approach given in \cite{sonneveldt2010immersion,karagiannis2008observer} since unknown time-varying  functions, instead of unknown constant parameters, are considered in this work. Note that the definition of the scaled estimation error $z$ and the design of $\beta,\xi$ are different from their counterparts in \cite{sonneveldt2010immersion,karagiannis2008observer}.
\end{remark}

Before the end of this section, we show the design of an IIDOB-based tracking controller, which could be used as a nominal controller in the IIDOB-CBF-QP in Section \ref{sec:dobcbf}. 
Note that $\dot{\hat{d}}$ can be expressed as
\begin{equation}\label{dothatd}
\dot{\hat{d}}=\dot \xi+\frac{\pa\be}{\pa x}\dot x+\frac{\pa\be}{\pa\hat x}\dot{\hat{x}}+\frac{\pa\be}{\pa u}v\stackrel{\eqref{dob}}{=}-r\frac{\pa\beta}{\pa x}z.
\end{equation} The following proposition presents an IIDOB-based tracking control law provided the right inverse of $g$ exists.

\begin{proposition}\label{theorem:tracking}
Consider system \eqref{eqnsys} and suppose that all conditions of Theorem \ref{theorem:dob} hold such that the IIDOB shown in \eqref{dob} exists. Suppose that $\ka$ defined in \eqref{eqkappa} is greater than $1$, and the right inverse of $g$ exists for any $x\in \R^n$. Given a reference trajectory $x_d(t)$ where $x_d(t)$ and $\dot x_d(t)$ are bounded, $\forall t\geq 0$, if the control law is designed as 
\begin{IEEEeqnarray}{rCl}
\IEEEyesnumber\label{trackingcontrol}
\IEEEyessubnumber\label{trackingcontrol:ud}
    u_d&=&-g^{\dagger}\left(f+\left(\alpha_1+\fot r^2\right) e_x+\hat d-\dot x_d\right),\\
\IEEEyessubnumber \label{trackingcontrol:v}
    v&=&-\alpha_2e_u+\mathcal{G}_1-g^\top e_x-\frac{\|\mathcal{G}_2\|^2}{2}e_u,\label{vfinal}
\end{IEEEeqnarray}
where $e_x= x-x_d$, $e_u=u-u_d$, $g^{\dagger}$ is the right inverse of $g$, $\mathcal{G}_1=\frac{\pa u_d}{\pa t}+\frac{\pa u_d}{\pa x}(f+gu+\hat d)+\frac{\pa u_d}{\pa r}\dot r$, $\alpha_1,\alpha_2>0$ are positive constants, and $\mathcal{G}_2=r\left(\frac{\pa u_d}{\pa\hat d}\frac{\pa\beta}{\pa x}+\frac{\pa u_d}{\pa x}\right)$,
then the tracking error $e_x$ is globally UUB.
\end{proposition}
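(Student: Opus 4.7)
The plan is to build a cascade between the tracking errors $e_x=x-x_d$, $e_u=u-u_d$ and the scaled observer state $z$ from Theorem \ref{theorem:dob}, and then to close a Lyapunov argument by augmenting a quadratic tracking Lyapunov function with the observer Lyapunov function $V=\tfrac12 z^{\top}z$ used in \eqref{dotv0}.

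First, I would differentiate $e_x$ along the closed loop. Substituting $u=u_d+e_u$ into \eqref{eqnsys1}, writing $d=\hat d-e_d=\hat d-rz$, and using the right-inverse identity $g u_d=-(f+(\alpha_1+\tfrac12 r^2)e_x+\hat d-\dot x_d)$ that follows from \eqref{trackingcontrol:ud}, the $f$, $\hat d$, and $\dot x_d$ terms telescope and one is left with
\begin{equation*}
\dot e_x=-\bigl(\alpha_1+\tfrac12 r^2\bigr)e_x-rz+g e_u.
\end{equation*}
Next I would compute $\dot u_d$ by the chain rule, viewing $u_d$ as a function of $(t,x,r,\hat d)$. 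The pieces that do not depend on $z$ assemble exactly into $\mathcal{G}_1$. The $z$-dependent pieces come from two places: the term $-rz$ inside $\dot x=f+gu+\hat d-rz$, which contributes $-r\tfrac{\pa u_d}{\pa x}z$, and the key identity $\dot{\hat d}=-r\tfrac{\pa\beta}{\pa x}z$ from \eqref{dothatd}, which contributes $-r\tfrac{\pa u_d}{\pa \hat d}\tfrac{\pa\beta}{\pa x}z$. Together they yield $\dot u_d=\mathcal{G}_1-\mathcal{G}_2 z$, so substituting \eqref{trackingcontrol:v} cancels $\mathcal{G}_1$ and embeds the skew term $-g^{\top}e_x$, producing
\begin{equation*}
\dot e_u=-\alpha_2 e_u-g^{\top}e_x-\tfrac{\|\mathcal{G}_2\|^2}{2}e_u+\mathcal{G}_2 z.
\end{equation*}

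Finally, with the composite Lyapunov function $U=\tfrac12\|e_x\|^2+\tfrac12\|e_u\|^2+V$, the cross terms $e_x^{\top}g e_u$ and $-e_u^{\top}g^{\top}e_x$ cancel, and the designed damping terms $-\tfrac12 r^2\|e_x\|^2$ and $-\tfrac{\|\mathcal{G}_2\|^2}{2}\|e_u\|^2$ are used with Young's inequality to absorb the $z$-driven perturbations $-re_x^{\top}z$ and $e_u^{\top}\mathcal{G}_2 z$, each at the cost of an additive $\tfrac12\|z\|^2$. Invoking \eqref{dotv0} for $\dot V$ then gives
\begin{equation*}
\dot U\leq-\alpha_1\|e_x\|^2-\alpha_2\|e_u\|^2-(\kappa-1)\|z\|^2+\omega,
\end{equation*}
so the hypothesis $\kappa>1$ is precisely what is needed to keep the $\|z\|^2$ coefficient negative, giving $\dot U\leq-\mu U+\omega$ with $\mu=\min\{2\alpha_1,2\alpha_2,2(\kappa-1)\}$; the Comparison Lemma then delivers global UUB of $(e_x,e_u,z)$, hence of $e_x$. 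The main obstacle I expect is the chain-rule bookkeeping that produces the clean identity $\dot u_d=\mathcal{G}_1-\mathcal{G}_2 z$, because one must recognize $\mathcal{G}_1$ and $\mathcal{G}_2$ as the exact accounting of the non-$z$ and $z$-proportional parts of $\dot u_d$ generated by $\dot x$, $\dot{\hat d}$, and $\dot r$; once this identification is made, the remaining Lyapunov computation is routine.
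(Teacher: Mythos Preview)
Your proposal is correct and follows essentially the same approach as the paper: the same error decompositions $\dot e_x=-(\alpha_1+\tfrac12 r^2)e_x-rz+ge_u$ and $\dot u_d=\mathcal{G}_1-\mathcal{G}_2 z$, the same composite Lyapunov function $U=\tfrac12\|e_x\|^2+\tfrac12\|e_u\|^2+\tfrac12\|z\|^2$, and the same final bound $\dot U\leq-\alpha_1\|e_x\|^2-\alpha_2\|e_u\|^2-(\kappa-1)\|z\|^2+\omega$. The only cosmetic difference is that the paper builds the Lyapunov function in two backstepping stages ($V_1=\tfrac12\|e_x\|^2+\tfrac12\|z\|^2$, then $V_2=V_1+\tfrac12\|e_u\|^2$) whereas you assemble it all at once.
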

\begin{proof}
Define $V_1=\fot e_x^\top e_x+\fot z^\top z$ as a candidate Lyapunov function where $z$ is defined in \eqref{eqz}. Note that 
$\dot V_1\stackrel{\eqref{dotv0}}{\leq} e_x^\top (f+gu+pw-\dot x_d)-\kappa\|z\|^2+\omega
    \stackrel{\eqref{trackingcontrol:ud}}{\leq} -\left(\alpha_1+\fot r^2\right)\|e_x\|^2-re_x^\top z+e_x^\top g e_u-\kappa\|z\|^2+\omega\leq -\alpha_1\|e_x\|^2-\left(\kappa-\fot\right)\|z\|^2+e_x^\top ge_u+\omega$,
where $\omega$ is defined in \eqref{eqomega}. Since $u_d$ is a function of $x,r,\hat d$ and $t$, its derivative is $\dot u_d
    = \frac{\pa u_d}{\pa t}+\frac{\pa u_d}{\pa x}(f+gu+\hat d)+\frac{\pa u_d}{\pa r}\dot r-r\left(\frac{\pa u_d}{\pa\hat d}\frac{\pa\beta}{\pa x}+\frac{\pa u_d}{\pa x}\right)z=\mathcal{G}_1-\mathcal{G}_2z$.
Then, we define $V_2=V_1+\fot e_u^\top e_u$ as an augmented Lyapunov function candidate,  
whose derivative satisfies $\dot V_2\stackrel{\eqref{dothatd}}{\leq} \dot V_1+e_u^\top(v-\mathcal{G}_1+\mathcal{G}_2z)\leq -\alpha_1\|e_x\|^2-\left(\kappa-1\right)\|z\|^2+ e_u^\top g^\top e_x+e_u^\top(v-\mathcal{G}_1)+\frac{\|e_u\|^2\|\mathcal{G}_2\|^2}{2}+\omega\stackrel{\eqref{trackingcontrol:v}}{\leq}-\alpha_1\|e_x\|^2-(\kappa-1)\|z\|^2-\alpha_2\|e_u\|^2+\omega$.
Thus, $\dot V_2\leq -\vartheta V_2+\omega$,  
where $\vartheta=\min\{2\alpha_1,2\kappa-2,2\alpha_2\}$. Hence, one can see that $\|e_x\|\leq \sqrt{2V_2(0)e^{-\vartheta t}+\frac{2\om}{\vartheta}}$, indicating $e_x$ is globally UUB. This completes the proof.
 \hfill $\Box$
\end{proof}

When $g$ has no full row rank, an IIDOB-based tracking controller similar to Proposition \ref{theorem:tracking} can still be designed by following the backstepping technique \cite{KKK95}, provided some control Lyapunov function conditions hold. The details are omitted due to space limitation.

\begin{remark}\label{remark:surface}
The dynamic surface control \cite{swaroop2000dynamic} or command filter \cite{farrell2009command} technique can be adopted to bypass the tedious calculation of partial derivatives of $u_d$. For example, the idea of the dynamic surface control is to let $u_d$ defined in \eqref{trackingcontrol:ud} pass a low-pass filter
$\epsilon \dot u_d^f = -u_d^f+u_d$,
where $u_d^f$ is the filtered signal and $\ep$ is a small time constant. Then, one can replace $u_d$ with $u_d^f$ and use $\dot u_d^f$ directly in the design of $v$, instead of computing partial derivatives of $u_d$. 
\end{remark}

\section{IIDOB-CBF-QP-based Safe Controller}
\label{sec:dobcbf}
In this section, we will present an IIDOB-CBF-QP-based safe control design method to solve Problem \ref{prob2}. 

We will design a safe controller $v$ based on the augmented system shown in \eqref{eqnsys} that is used for the IIDOB design in the preceding section. Two issues need to be addressed in this design: (i) The time derivative of $\hat d$ is indispensable in control design and it depends on $z$, which is unknown since $z$ relies on $w$, as shown in \eqref{dothatd}; however, considering the worst-case of $\dot{\hat{d}}$ may lead to unnecessary conservatism. (ii) The minimum DRD of system \eqref{eqnsys} is lower than its minimum IRD (i.e., $d$ appears prior to $v$ when one differentiates $h$), which makes the direct decoupling of the disturbance from the system difficult even if the disturbance is precisely estimated \cite{yang2013static}.

We address the first challenge by designing a filter to obtain an alternative disturbance estimation signal whose derivative is known. 
Specifically, given an IIDOB shown in \eqref{dob}, we design the following  filter: 
\begin{equation}
    \dot{\hat{d}}_f=-\left(T_1+T_2r^2\left\|\frac{\pa \beta}{\pa x}\right\|^2\right)(\hat d_f-\hat d),\label{commandfilter1}
\end{equation}
where $\hat d_f$ denotes the filtered disturbance estimation with $\hat d_f(0)=\hat d(0)$, $r$ is governed by \eqref{dob:r}, $\beta$ is given in \eqref{dob:beta}, and $T_1,T_2>0$ are tuning parameters. 
The filter shown in \eqref{commandfilter1} is a modified low-pass filter by adding an additional  term $-T_2r^2\left\|\frac{\pa\beta}{\pa x}\right\|^2 (\hat d_f-\hat d)$ whose usefulness will be clear from the proof of Lemma 1.
From \eqref{commandfilter1} one can see that the derivative of $\hat d_f$ is completely known. 
Define the filtering error $e_f$ as 
\begin{equation}
    e_f=\hat d_f-\hat d.\label{filteringerror}
\end{equation}
The following result shows that $\hat d_f$ is close to $\hat d$ in the sense that $e_f$ is bounded by a known time-varying function whose ultimate bound can be arbitrarily small by choosing appropriate parameters.
\begin{lemma}\label{lemma:filteriidob}
Consider the augmented system \eqref{eqnsys}, the IIDOB as shown in \eqref{dob}, and the filter given in \eqref{commandfilter1}. If Assumption \ref{assmp:d} holds and $T_2>\frac{1}{4\ka}$, where $\ka$ is defined in \eqref{eqkappa}, then the filtering error $e_f$ satisfies
\begin{equation}
\|e_f(t)\|\leq \sqrt{\left(\|z(0)\|^2-\frac{2\om}{\zeta}\right)e^{-\zeta t}+\frac{2\om}{\zeta}}= \varrho_f(t)\label{efboundlemma1}
\end{equation}
for any $t\geq 0$, where $\zeta=\min\{2T_1,2\kappa-\frac{1}{2T_2}\}$ and $\om$ is defined in \eqref{eqomega}.
\end{lemma}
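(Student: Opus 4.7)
The plan is to derive an ODE for the filtering error $e_f$ and then combine a Lyapunov analysis for $e_f$ with the Lyapunov estimate for $z$ already obtained inside the proof of Theorem \ref{theorem:dob}. First I would differentiate \eqref{filteringerror} and substitute \eqref{commandfilter1} together with the identity \eqref{dothatd} to get
\begin{equation*}
\dot e_f = -\Bigl(T_1 + T_2 r^2\bigl\|\tfrac{\pa\beta}{\pa x}\bigr\|^2\Bigr) e_f + r\tfrac{\pa\beta}{\pa x} z,
\end{equation*}
so that the filtering-error dynamics is driven by the residual $z$ through the same factor $r\,\pa\beta/\pa x$ that appears in $\dot{\hat d}$. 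This structure is precisely what motivates the nonstandard term $-T_2 r^2\|\pa\beta/\pa x\|^2(\hat d_f-\hat d)$ in the filter.

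Next I would take $V_f=\tfrac12 e_f^\top e_f$ and bound its derivative with Young's inequality $e_f^\top r\tfrac{\pa\beta}{\pa x}z \le T_2 r^2 \|\tfrac{\pa\beta}{\pa x}\|^2\|e_f\|^2 + \tfrac{1}{4T_2}\|z\|^2$, so that the state-dependent damping in the filter exactly cancels the quadratic cross-term and yields $\dot V_f \le -T_1\|e_f\|^2 + \tfrac{1}{4T_2}\|z\|^2$. The leftover $\tfrac{1}{4T_2}\|z\|^2$ is the key term: it can be absorbed by the negative $-\kappa\|z\|^2$ already available from \eqref{dotv0} provided $T_2>\tfrac{1}{4\kappa}$.

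I would then introduce the composite Lyapunov function $W_f = V_f + \tfrac12 z^\top z$ and add the two inequalities to obtain
\begin{equation*}
\dot W_f \le -T_1\|e_f\|^2 - \Bigl(\kappa - \tfrac{1}{4T_2}\Bigr)\|z\|^2 + \om \le -\zeta W_f + \om,
\end{equation*}
with $\zeta = \min\{2T_1,\,2\kappa-\tfrac{1}{2T_2}\}>0$ under the hypothesis on $T_2$. Applying the comparison lemma and using the initialization $\hat d_f(0)=\hat d(0)$, which gives $W_f(0)=\tfrac12\|z(0)\|^2$, yields $W_f(t)\le\bigl(\tfrac12\|z(0)\|^2-\tfrac{\om}{\zeta}\bigr)e^{-\zeta t}+\tfrac{\om}{\zeta}$, and the bound $\|e_f\|^2 \le 2W_f$ reproduces \eqref{efboundlemma1}.

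The main subtlety is not a computation but a structural observation: because $z$ is unobservable, a naive low-pass filter would leave an $\|z\|$-dependent forcing term that cannot be dominated by the fixed linear damping $T_1$, since $r\|\pa\beta/\pa x\|$ is a state-dependent gain that is not known to be bounded a priori. The specifically engineered damping term $T_2 r^2\|\pa\beta/\pa x\|^2$ in \eqref{commandfilter1} is what makes the Young's-inequality cancellation uniform in the state, and recognizing this is the only nontrivial step; after that the remainder is a standard two-variable Lyapunov chaining argument relying on \eqref{dotv0}.
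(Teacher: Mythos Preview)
Your proposal is correct and follows essentially the same argument as the paper. The only cosmetic difference is that the paper defines the composite function $V_f=\tfrac12 e_f^\top e_f+\tfrac12 z^\top z$ in one step and bounds its derivative directly, whereas you first bound the $e_f$-part and then add $\tfrac12 z^\top z$; the resulting inequality $\dot W_f\le -T_1\|e_f\|^2-(\kappa-\tfrac{1}{4T_2})\|z\|^2+\omega$ and the use of $\hat d_f(0)=\hat d(0)$ to set the initial condition are identical.
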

\begin{proof}
Substituting \eqref{dothatd} into
\eqref{commandfilter1} gives
\begin{equation}
    \dot e_f=-T_1e_f-T_2r^2\left\|\frac{\pa \beta}{\pa x}\right\|^2 e_f+r\frac{\pa\beta}{\pa x}z.\label{ezt}
\end{equation}
Construct a candidate Lyapunov function $V_f$ as 
\begin{equation}
    V_f=\fot e_f^\top e_f+\fot z^\top z,\label{eqnvf}
\end{equation}
whose derivative satisfies
\begin{IEEEeqnarray}{rCl}
\hspace{-0mm}\dot V_f\!\!&\stackrel{\eqref{dotv0}}{\leq}&\!\! -\!T_1\|e_f\|^2\!-\!T_2r^2\left\|\frac{\pa \beta}{\pa x}\right\|^2\!\|e_f\|^2\!+\!re_f^\top \frac{\pa\beta}{\pa x}z\!-\!\kappa\|z\|^2\!+\!\omega\nonumber\\
\!\!&\leq&\!\!\! -\!T_1\|e_f\|^2\!\!-\!\!T_2r^2\!\left\|\!\frac{\pa \beta}{\pa x}\!\right\|^2\!\!\!\!\|e_f\|^2\!\!+\!r\!\left\|\! \frac{\pa\beta}{\pa x}\!\right\|\!\|e_f\|\|z\|\!-\!\kappa\|z\|^2\!+\!\omega \nonumber\\
&\leq& -T_1\|e_f\|^2-\left(\kappa-\frac{1}{4T_2}\right)\|z\|^2+\omega\nonumber\\
&\leq& -\zeta V_f+\omega,\label{dotvz0}
\end{IEEEeqnarray}
where the second inequality comes from Cauchy-Schwarz inequality.
Note that $V_f(0)=\frac{1}{2}\|z(0)\|^2$ because $\hat d_f(0)=\hat d(0)$. It is standard to obtain \eqref{efboundlemma1} from \eqref{dotvz0}.\hfill $\Box$
\end{proof}

Next, we will present the safe controller design. Consider system \eqref{controlaffine} and the safe set $\mathcal{C}$ defined in \eqref{setc}. Suppose that $r_I= r_D=\iota\geq 1$ for system \eqref{controlaffine} with respect to $h$; clearly, because of \eqref{eqnsys2},  $r_D<r_I$ for the augmented system \eqref{eqnsys}. 
To address the second issue above, we define a set of functions $h_0,h_1,\dots,h_{\iota}$ as
\begin{IEEEeqnarray}{rCl}
\IEEEyesnumber\label{ecbf:total}
\IEEEyessubnumber\label{ecbf:1}
h_0(x)&=& h,\\
\IEEEyessubnumber\label{ecbf:2}
h_{i}(x)&=&\left(\frac{\di }{\di t}+\la_{i-1}\right)\circ h_{i-1}, \ i\in[\iota-1],\\
\IEEEyessubnumber
h_\iota(x,u,r,\hat d_f)&=&\frac{\pa h_{\iota-1}}{\pa x}(f+gu+\hat d_f)-\frac{(1+r^2)\left\|\frac{\pa h_{\iota-1}}{\pa x}\right\|^2}{2\tilde\rho(\zeta-\la_{\iota-1})}\nonumber\\
    &&-\tilde\rho\omega+\la_{\iota-1} h_{\iota-1},\label{definitionbarh:h1}
\end{IEEEeqnarray}
where
$\tilde\rho>0,\la_i>0 (i=0,1,\cdots,\iota-1)$ are tuning parameters, and $\omega>0$ is the constant defined in \eqref{eqomega}. 
Based on the IIDOB shown in \eqref{dob}, the filter given in \eqref{commandfilter1}, and the notations above, the following result provides a safe controller $v$ that ensures the forward invariance of $\mathcal{C}$ for system \eqref{controlaffine}.

\begin{theorem}\label{theorem:cbf}
Consider system \eqref{controlaffine} and the safe set $\mathcal{C}$ defined in \eqref{setc}. Suppose that all conditions of Theorem \ref{theorem:dob} hold such that the IIDOB shown in \eqref{dob} exists. Suppose that $r_I=r_D=\iota\geq 1$ for system \eqref{controlaffine} with respect to $h$, and there exist positive constants $\rho$, $\tilde\rho$, and $\la_i(i=0,1,\cdots,\iota)$ such that $\la_\iota<2\ka$, $\la_{\iota-1}<\zeta$, $h_i(x(0))>0$ for $i=0,\cdots,\iota-2$, $h_{\iota-1}(x(0))-\tilde\rho V_f(e_f(0),z(0))>0$, and $h_\iota(x(0),u(0),r(0),\hat d_f(0))-\frac{\rho}{2}\|z(0)\|^2>0$, where $\ka$, $\zeta$, and $V_f$ are defined in \eqref{eqkappa}, \eqref{efboundlemma1}, and \eqref{eqnvf}, respectively. If
$\sup_{v\in\R^m}[\psi_0+\psi_1v]\geq 0$
holds for any $u\in\R^m$, $r\in[1,\varrho_r]$, $\|\hat d\|\leq \|p\|\omega_0+\varrho_d$, $\|\hat d_f\|\leq \|p\|\om_0+\varrho_d+\varrho_f$, $\|e_f\|\leq \varrho_f$, $x\in\C$, and $t\geq 0$, where $\varrho_d=\varrho_z \varrho_r$ with $\varrho_z$ and $\varrho_r$ defined in \eqref{ineqz} and \eqref{rboundr}, respectively, $\varrho_f$ is defined in \eqref{efboundlemma1}, and
\begin{IEEEeqnarray}{rCl}
\IEEEyesnumber\label{cbfpsi01}
\psi_0 &=& \frac{\pa h_{\iota}}{\pa x}(f\!+\!gu\!+\!\hat d)+\frac{\pa h_{\iota}}{\pa r}\dot r\!+\!\frac{\pa h_{\iota}}{\pa \hat d_f}\dot{\hat{d}}_f-\frac{r^2\left\|\frac{\pa h_{\iota}}{\pa x}\right\|^2}{\rho(4\ka-2\la_\iota)}\nonumber\\
\IEEEyessubnumber\label{psi0}
&&-\rho\omega+\la_\iota h_{\iota},\\
\IEEEyessubnumber\label{psi1}
\psi_1&=& \frac{\pa h_{\iota}}{\pa u},
\end{IEEEeqnarray}
with $\dot r$, $\dot{\hat{d}}_f$, and $h_\iota$ defined in \eqref{dob:r}, \eqref{commandfilter1}, and \eqref{definitionbarh:h1}, respectively, 
then any Lipschitz controller $v\in K_{BF}=\{{\mathfrak v}\in\R^m:\psi_0+\psi_1 {\mathfrak v}\geq 0\}$
will ensure $h \geq 0$ for all $t\geq 0$.
\end{theorem}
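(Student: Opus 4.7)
The plan is to prove safety by a three-stage cascade: first show that the top-level function $h_\iota$ stays nonnegative thanks to the CBF constraint; second, use an auxiliary function that combines $h_{\iota-1}$ with the filter Lyapunov function $V_f$ from Lemma \ref{lemma:filteriidob} to show $h_{\iota-1}\ge0$; third, peel off $h_{\iota-2},\dots,h_0$ by the standard HOCBF comparison argument. Throughout, the ``magic'' terms that appear in \eqref{definitionbarh:h1} and \eqref{psi0} (namely $\tilde\rho\omega$, $\rho\omega$, the Young-style denominators $2\tilde\rho(\zeta-\lambda_{\iota-1})$ and $\rho(4\kappa-2\lambda_\iota)$) are designed exactly to absorb, via Young's inequality, the unknown terms $e_f$ and $rz$ that contaminate the time derivatives.

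\textbf{Stage 1: $h_\iota\ge0$.} I would compute $\dot h_\iota$ along trajectories of \eqref{eqnsys}, replacing $d$ by $\hat d-e_d=\hat d-rz$. This gives
$\dot h_\iota=\psi_0+\psi_1 v+\dfrac{r^2\|\partial h_\iota/\partial x\|^2}{\rho(4\kappa-2\lambda_\iota)}+\rho\omega-\lambda_\iota h_\iota-r\dfrac{\partial h_\iota}{\partial x}z$.
The CBF condition $\psi_0+\psi_1 v\ge0$ kills the first two terms. I then form $H_\iota=h_\iota-\tfrac{\rho}{2}\|z\|^2$, use $\rho\dot V\le -\rho\kappa\|z\|^2+\rho\omega$ from \eqref{dotv0}, and apply Young's inequality $-r\tfrac{\partial h_\iota}{\partial x}z\ge -\tfrac{r^2\|\partial h_\iota/\partial x\|^2}{\rho(4\kappa-2\lambda_\iota)}-\rho(\kappa-\lambda_\iota/2)\|z\|^2$. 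The cross term cancels exactly and the $\|z\|^2$ pieces collapse to $\tfrac{\rho\lambda_\iota}{2}\|z\|^2$, yielding $\dot H_\iota\ge -\lambda_\iota H_\iota$. Since $H_\iota(0)>0$ by hypothesis, comparison gives $H_\iota\ge0$, hence $h_\iota\ge\tfrac{\rho}{2}\|z\|^2\ge0$.

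\textbf{Stage 2: $h_{\iota-1}\ge0$.} Along trajectories, $\dot h_{\iota-1}=\tfrac{\partial h_{\iota-1}}{\partial x}(f+gu+d)=\tfrac{\partial h_{\iota-1}}{\partial x}(f+gu+\hat d_f)-\tfrac{\partial h_{\iota-1}}{\partial x}(e_f+rz)$, so by \eqref{definitionbarh:h1},
$\dot h_{\iota-1}+\lambda_{\iota-1} h_{\iota-1}=h_\iota+\tfrac{(1+r^2)\|\partial h_{\iota-1}/\partial x\|^2}{2\tilde\rho(\zeta-\lambda_{\iota-1})}+\tilde\rho\omega-\tfrac{\partial h_{\iota-1}}{\partial x}(e_f+rz)$.
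I would define $H_{\iota-1}=h_{\iota-1}-\tilde\rho V_f$ and use Lemma \ref{lemma:filteriidob} ($\dot V_f\le -\zeta V_f+\omega$) together with two Young's-inequality bounds (one on $-\tfrac{\partial h_{\iota-1}}{\partial x}e_f$, one on $-r\tfrac{\partial h_{\iota-1}}{\partial x}z$, both with weight $\tilde\rho(\zeta-\lambda_{\iota-1})/2$). The $\tilde\rho\omega$ cancels $\tilde\rho\omega$ from $\dot V_f$, the $(1+r^2)$ denominator cancels exactly against the Young bound, and the residual $\|e_f\|^2+\|z\|^2=2V_f$ combines with $\tilde\rho\zeta V_f$ to leave $\tilde\rho\lambda_{\iota-1}V_f$, giving $\dot H_{\iota-1}+\lambda_{\iota-1}H_{\iota-1}\ge h_\iota\ge0$. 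With $H_{\iota-1}(0)>0$, comparison yields $H_{\iota-1}\ge0$, hence $h_{\iota-1}\ge\tilde\rho V_f\ge0$.

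\textbf{Stage 3 and closure.} For $i=\iota-2,\dots,0$, the definition \eqref{ecbf:2} reads $h_{i+1}=\dot h_i+\lambda_i h_i$, so $h_{i+1}\ge0$ plus $h_i(0)>0$ gives $h_i(t)\ge h_i(0)e^{-\lambda_i t}>0$ by the standard comparison lemma, and the cascade terminates at $h=h_0\ge0$. One caveat is that the CBF condition in the statement is only posed for $(u,r,\hat d,\hat d_f,e_f)$ inside the a-priori bounded ranges $[1,\varrho_r]$, $\{\|\cdot\|\le\|p\|\omega_0+\varrho_d\}$, $\{\|\cdot\|\le\|p\|\omega_0+\varrho_d+\varrho_f\}$, $\{\|\cdot\|\le\varrho_f\}$, and $x\in\mathcal C$; I would close the argument by noting that Theorem \ref{theorem:dob} and Lemma \ref{lemma:filteriidob} guarantee $r\in[1,\varrho_r]$, $\|e_d\|\le\varrho_d$, and $\|e_f\|\le\varrho_f$ for all $t\ge0$, so these bounds (and hence the bounds on $\hat d=d+e_d$ and $\hat d_f=\hat d+e_f$) are automatically inherited, and Lipschitz continuity of the chosen $v\in K_{BF}$ gives forward completeness on $\mathcal C$. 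The main obstacle is bookkeeping: verifying that the specific Young weights baked into \eqref{definitionbarh:h1} and \eqref{psi0} cancel exactly against the contamination terms and produce clean inequalities $\dot H_\iota\ge-\lambda_\iota H_\iota$ and $\dot H_{\iota-1}+\lambda_{\iota-1}H_{\iota-1}\ge h_\iota$ without leftover residuals.
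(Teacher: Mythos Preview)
Your proposal is correct and follows essentially the same approach as the paper: the auxiliary functions $H_\iota=h_\iota-\tfrac{\rho}{2}\|z\|^2$ and $H_{\iota-1}=h_{\iota-1}-\tilde\rho V_f$ are exactly the paper's $H_1$ and $H_2$, the Young-inequality cancellations you describe match the paper's manipulations line for line, and the final cascade $h_{\iota-1}\ge0\Rightarrow\cdots\Rightarrow h_0\ge0$ is identical. Your closing remark about the a-priori bounds on $r,\hat d,\hat d_f,e_f$ is a useful gloss that the paper leaves implicit.
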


\begin{proof}
Define  $H_1(x,u,r,\hat d_f,z)=h_\iota-\frac{\rho}{2}z^\top z$,
where $h_\iota$ is given in \eqref{definitionbarh:h1}. 
Since
\begin{IEEEeqnarray}{rCl}    \hspace{-2mm}\dot{H}_1\!&\stackrel{\eqref{dotv0}}{\geq}&\!\frac{\pa h_\iota}{\pa x}(f+gu+pw)+\frac{\pa h_{\iota}}{\pa u} v+\frac{\pa h_\iota}{\pa r}\dot r+\frac{\pa  h_\iota}{\pa \hat d_f}\dot{\hat{d}}_f\nonumber\\
&&+\rho\kappa\|z\|^2-\rho\omega\nonumber\\
    \!&=&\!\frac{\pa h_\iota}{\pa x}(f+gu+\hat d)+\frac{\pa h_{\iota}}{\pa u} v+\frac{\pa h_\iota}{\pa r}\dot r+\frac{1}{2}\rho\la_\iota\|z\|^2-\rho\omega\nonumber\\
    &&+\frac{\pa h_\iota}{\pa \hat d_f}\dot{\hat{d}}_f-r\frac{\pa h_\iota}{\pa x}z+\rho\left(\kappa-\frac{\la_\iota}{2}\right)\|z\|^2\nonumber\\
    \!&\geq&\!\frac{\pa h_\iota}{\pa x}(f+gu+\hat d)+\frac{\pa h_\iota}{\pa r}\dot r+\frac{\pa h_\iota}{\pa \hat d_f}\dot{\hat{d}}_f-\frac{r^2\left\|\frac{\pa h_\iota}{\pa x}\right\|^2}{\rho(4\ka-2\la_\iota)}\nonumber\\
    &&+\frac{1}{2}\rho\la_\iota\|z\|^2+\frac{\pa h_{\iota}}{\pa u} v-\rho\omega\nonumber\\
    \!&=&\! \psi_0+\psi_1v-\la_\iota H_1,
\end{IEEEeqnarray}
any $v\in K_{BF}$ will result in $\dot{H}_1\geq -\la_\iota\bar H_1$. Noting that $h_\iota(x(0),u(0),r(0),\hat d_f(0))-\frac{\rho}{2}\|z(0)\|^2>0\implies  H_1(x(0),u(0),r(0),\hat d_f(0),z(0))>0$,  
we can conclude that $H_1\geq 0$, which implies that $h_\iota\geq 0$, for any $t\geq 0$.

Define another function as $H_2(x,z,e_f)= h_{\iota-1}-\tilde\rho V_f$ 
where $V_f$ is given in \eqref{eqnvf}. Note that 
\begin{IEEEeqnarray}{rCl}
    &&\dot{H}_2+\la_{\iota-1} H_2\nonumber\\
    &\stackrel{\eqref{dotvz0}}{\geq}& \frac{\pa h_{\iota-1}}{\pa x}(f+g u+\hat d_f)-r\frac{\pa h_{\iota-1}}{\pa x}z-\frac{\pa h_{\iota-1}}{\pa x}e_f-\tilde\rho\omega\nonumber\\
    &&+\la_{\iota-1} h_{\iota-1}+\frac{(\zeta-\la_{\iota-1})\tilde\rho}{2}(\|z\|^2+\|e_f\|^2)\nonumber\\
    &\geq&\!\frac{\pa h_{\iota-1}}{\pa x}(f\!+\!g u\!+\!\hat d_f)\!-\!\frac{(1\!+\!r^2)\left\|\frac{\pa h_{\iota-1}}{\pa x}\right\|^2}{2\tilde\rho(\zeta-\la_{\iota-1})}\!-\!\tilde\rho\omega\!+\!\la_{\iota-1} h_{\iota-1}\nonumber\\
    &=& h_\iota\geq 0.
\end{IEEEeqnarray}
Since $h_{\iota-1}(x(0))-\tilde\rho V_f(e_f(0),z(0))>0\implies H_2(x(0),z(0),e_f(0))>0$, we have $H_2 \geq 0$, which implies that $h_{\iota-1}\geq 0$ for all $t\geq 0$. According to \eqref{ecbf:2}, one can conclude that $h_{i}\geq 0\implies h_{i-1}\geq 0$ for any $i\in[\iota-1]$ because $h_i(x(0))>0$, $i=0,1,\cdots,\iota-2$ \cite{nguyen2016exponential}. Therefore, one can see $h\geq 0$ for all $t\geq 0$. This completes the proof. \hfill $\Box$
\end{proof}

The safe controller $v$ proposed in Theorem \ref{theorem:cbf} can be obtained by solving the following convex IIDOB-CBF-QP:%
\begin{align}
\min_{v} \quad & \|v-v_{nom}\|^2\label{cbfQP}\\
\textrm{s.t.} \quad & \psi_0+\psi_1 v\geq 0, \nonumber
\end{align}
where $\psi_0,\psi_1$ are given in \eqref{cbfpsi01} and $v_{nom}$ is any given nominal control law that is potentially unsafe (e.g., the IIDOB-based tracking controller given in Proposition \ref{theorem:tracking}). Note that the IIDOB-CBF-QP \eqref{cbfQP} is convex and can be efficiently solved online. In fact, \eqref{cbfQP} has a  closed-form solution that can be expressed as \cite{ames2016control}:
\begin{equation*}
    v = \begin{cases}
    v_{nom},& \text{if}\  \psi_0+\psi_1 v_{nom} \geq 0,\\
    v_{nom} - \frac{\psi_0+\psi_1 v_{nom}}{\psi_1\psi_1^\top} \psi_1^\top, & \text{otherwise}.
    \end{cases}
\end{equation*}

\section{Simulation}
\label{sec:simulation}
In this section, two simulation examples are presented to demonstrate the effectiveness of the proposed method.

\begin{example}
Consider the following system:
\begin{equation}
    \hspace{-1mm}\begin{bmatrix}
    \dot x_1\\ \dot x_2
    \end{bmatrix}=
    \underbrace{\begin{bmatrix}
    x_2\\ x_1x_2   \end{bmatrix}}_{f(x)}+\underbrace{\begin{bmatrix}
    1&0\\0&1+\sin^2(x_1)
    \end{bmatrix}}_{g(x)}\begin{bmatrix}
    u_1\\u_2
\end{bmatrix}+\underbrace{\begin{bmatrix}x_1\\x_2   \end{bmatrix}}_{p(x)}w(t),\label{sim:numericalsystem}
\end{equation}
where $x=[x_1\ x_2]^\top$ is the state, $u=[u_1\ u_2]^\top$ is the control input, and $w$ is the disturbance. It is easy to verify that $f,g,p$ are smooth functions. 
In the simulation, we choose the disturbance as $w=5\sin(t)+2\cos(2t)+4\sin(3t)+3\cos(4t)$, which implies that Assumption \ref{assmp:d} holds with $\om_0=8,\om_1=26$. We select the initial condition as $x_1(0)=x_2(0)=-0.5$, and the reference trajectory of $x$ is $x_d(t)=[x_{1d}(t)\ x_{2d}(t)]^\top$, where $x_{1d}(t)=2\sin(t)$ and $x_{2d}(t)=2\cos(t)$.
 
We choose parameters $\ga=100$, $k_1=k_2=10$, $\theta=10$, $c=0.5$, and $\alpha_1=\alpha_2=50$ in Theorem \ref{theorem:dob} and Proposition \ref{theorem:tracking}. It is easy to verify that all conditions of Theorem \ref{theorem:dob} hold so that the IIDOB is designed as shown in \eqref{dob}. We design a tracking controller for \eqref{sim:numericalsystem} using the dynamic surface technique in \cite{swaroop2000dynamic} with the filter parameter $\ep=0.001$ (see Remark \ref{remark:surface}). As shown in Fig. \ref{fig:1}, the disturbance estimation of the proposed IIDOB is accurate and the tracking performance is satisfactory.

Next, we consider two safety sets $\mathcal{C}_1 = \{ x \in \R^2 : x_1+1 \geq 0\}$ and $\mathcal{C}_2 = \{ x \in \R^2 : 1-x_2 \geq 0\}$, which aim to keep $x_1\geq -1$ and $x_2\leq 1$. 
Define  $h_1=x_1+1$ and $h_2=1-x_2$. One can easily verify that the minimum IRD and the minimum DRD of system \eqref{sim:numericalsystem} with respect to $h_1,h_2$ are both 1, i.e., $r_I=r_D=1$.  
We choose parameters $\rho=\tilde\rho=1$, $\la_0=\la_1=50, T_1=50, T_2=1$ in Theorem \ref{theorem:cbf}, and let other parameters the same as above. We also choose the nominal controller as the tracking controller designed above. One can verify that all conditions of Theorem \ref{theorem:cbf} are satisfied, so that the IIDOB-CBF-QP-based controller \eqref{cbfQP} can ensure $h_1\geq 0$ and $h_2\geq 0$ for all $t\geq 0$. Indeed, as shown in Fig. \ref{fig:2}, trajectories of $x_1$ (or $x_2$) always remain within the safety set $\C_1$ (or $\C_2$). 

We also compare the tracking performance of the proposed controller \eqref{cbfQP} with the robust CBF approach proposed in \cite{jankovic2018robust}.
As shown in Fig. \ref{fig:2}, although the robust CBF controller can always ensure safety, its tracking performance of the reference trajectories inside the safe region is not as good as our proposed controller \eqref{cbfQP}.

\begin{figure}[!t]
\centering
\includegraphics[width=0.493\linewidth]{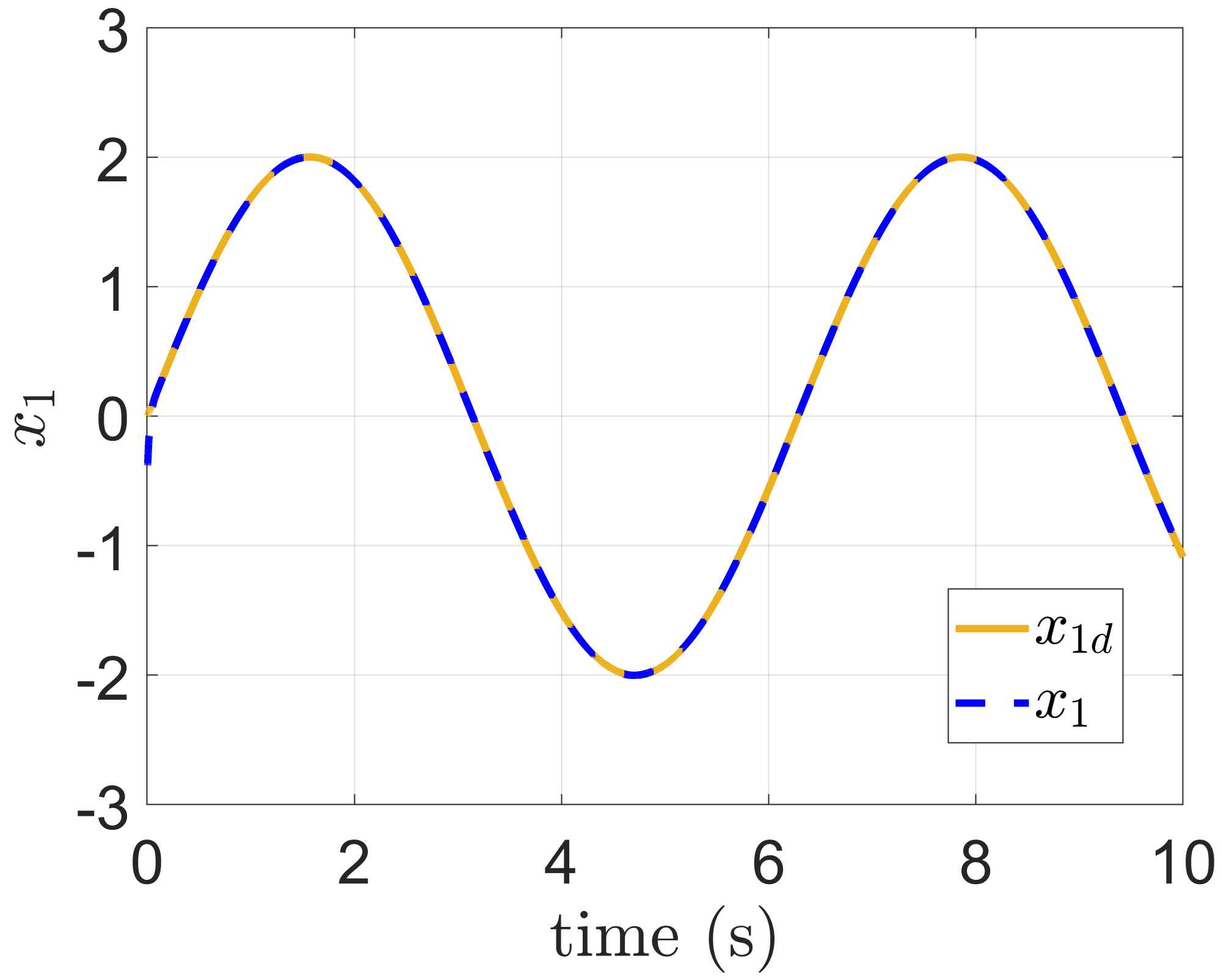}
\includegraphics[width=0.493\linewidth]{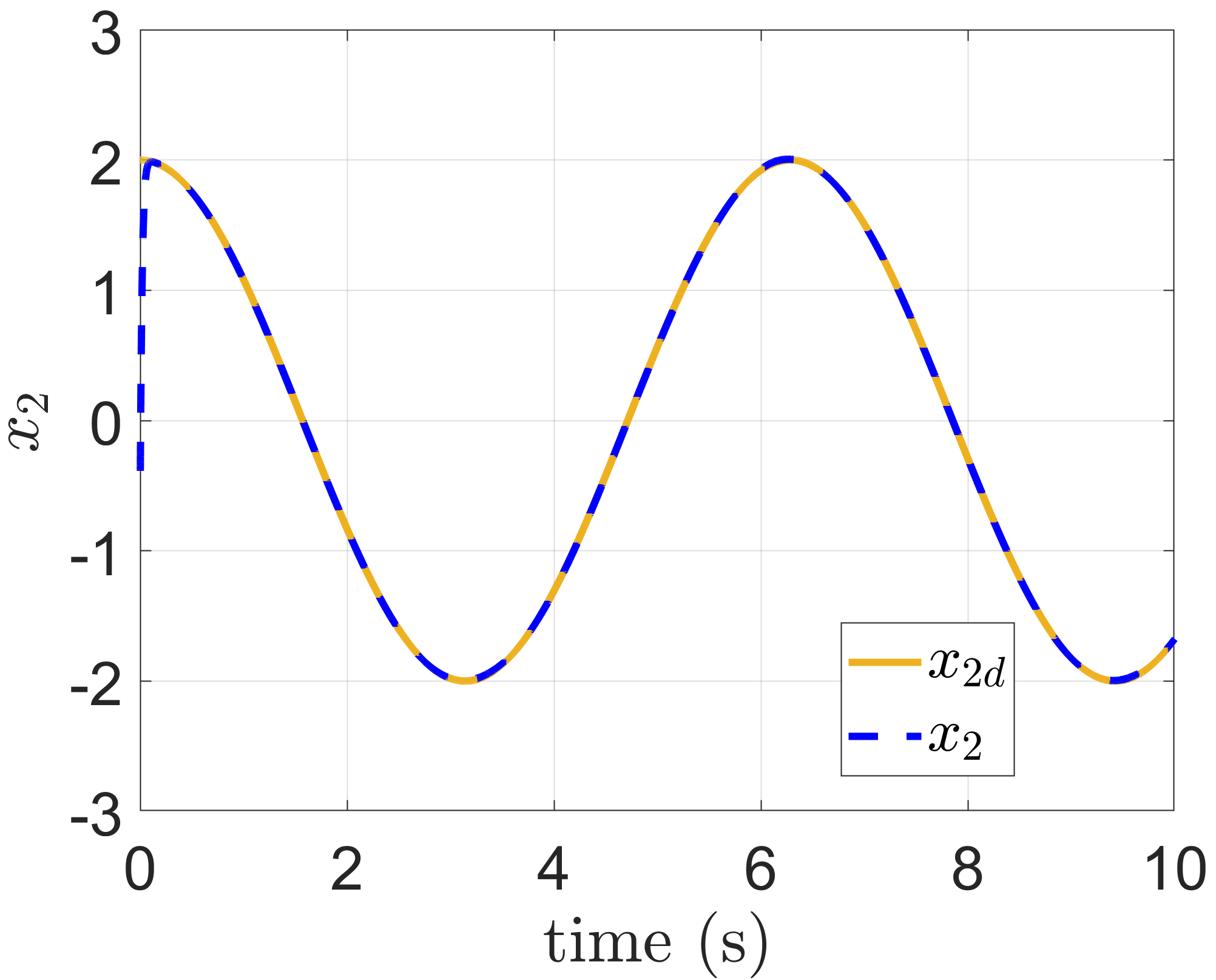}
\includegraphics[width=0.493\linewidth]{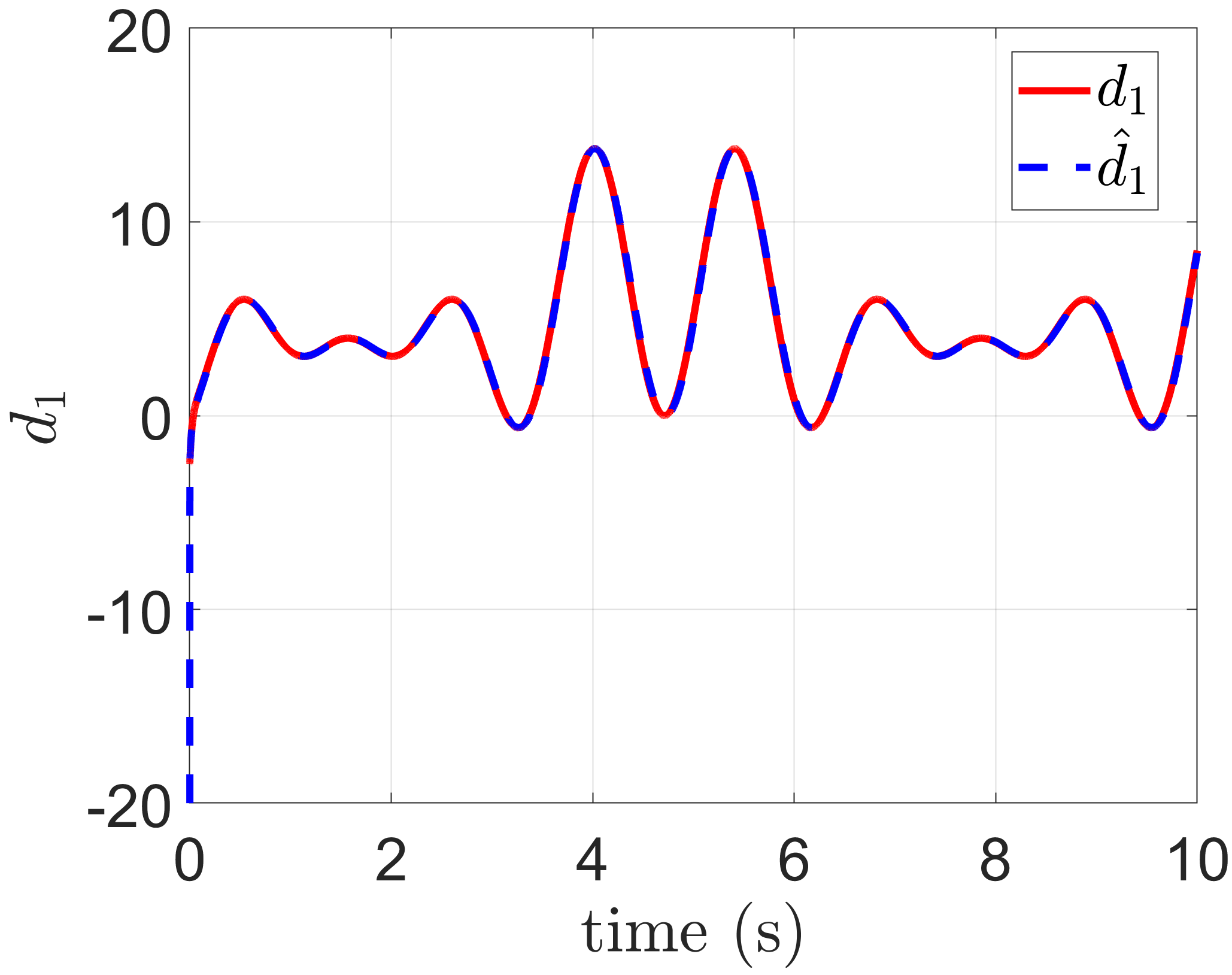}
\includegraphics[width=0.493\linewidth]{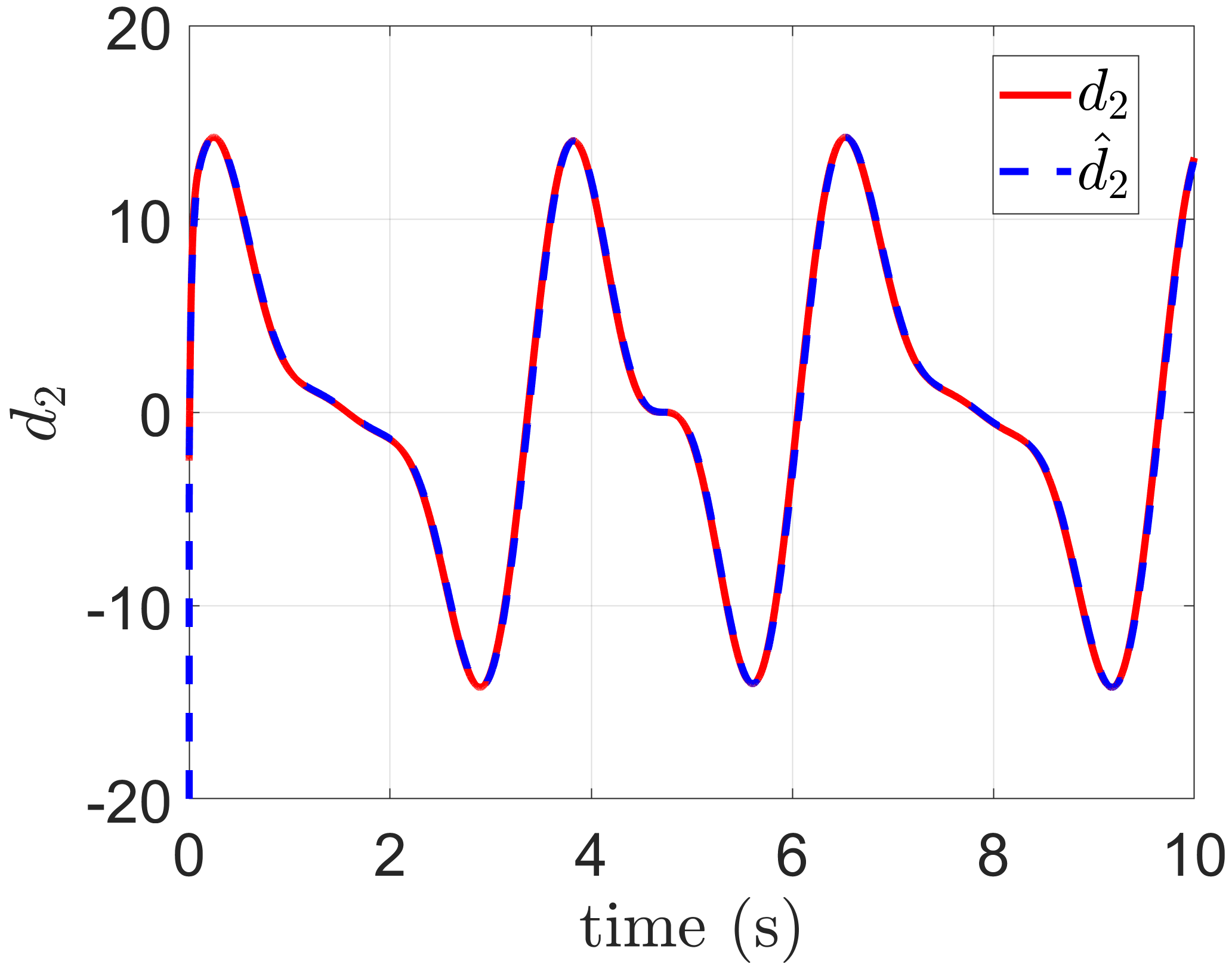}
\caption{Simulation results of the proposed IIDOB and IIDOB-based tracking controller for system \eqref{sim:numericalsystem}. (top) Trajectories of the states $x_1,x_2$ and the reference signals $x_{1d},x_{2d}$; (bottom) trajectories of the
total disturbances $d_1,d_2$ and the estimated total disturbances $\hat d_1,\hat d_2$.
}\label{fig:1}
\vskip -3mm
\end{figure}

\begin{figure}[!t]
\centering
\includegraphics[width=0.493\linewidth]{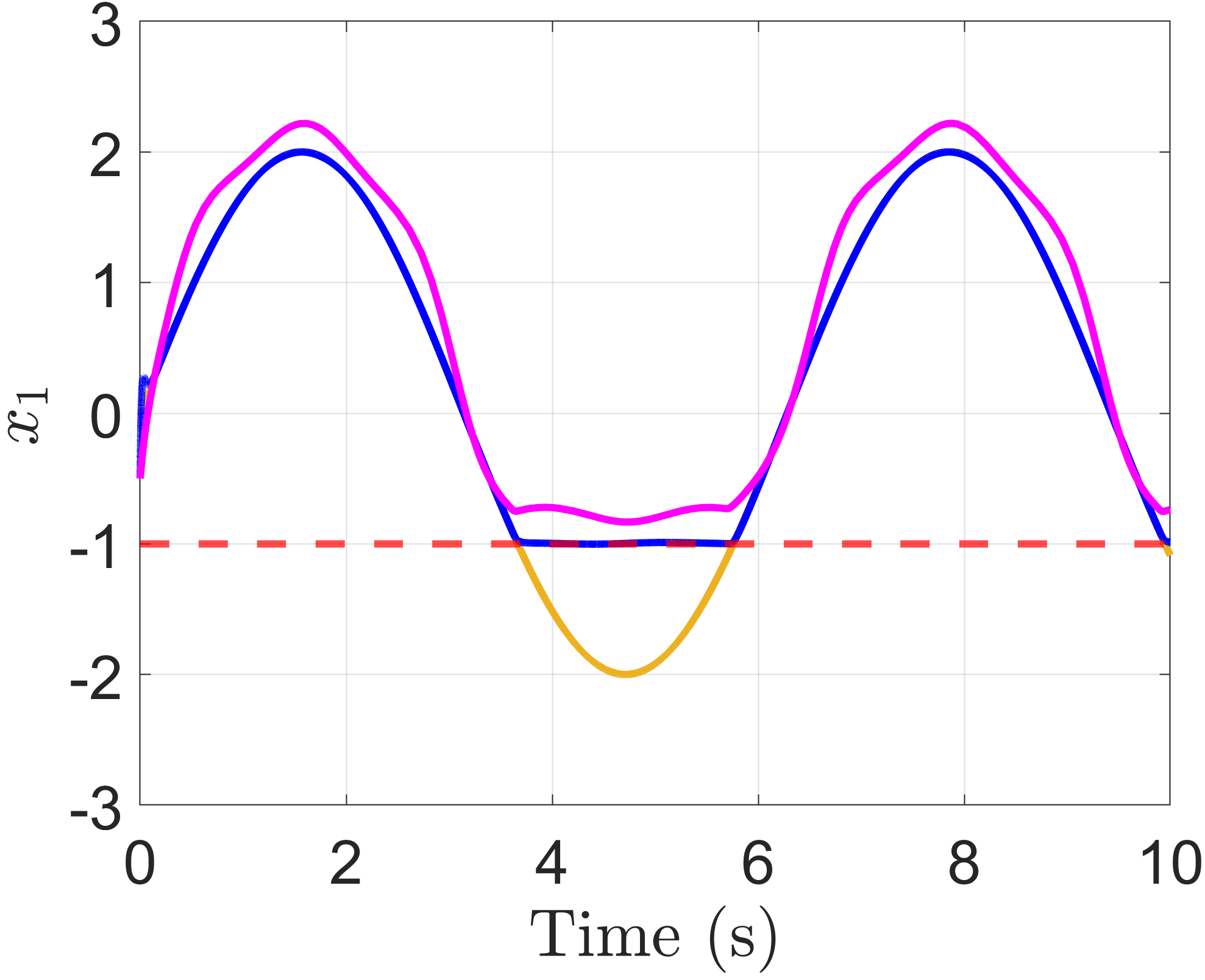}
\includegraphics[width=0.493\linewidth]{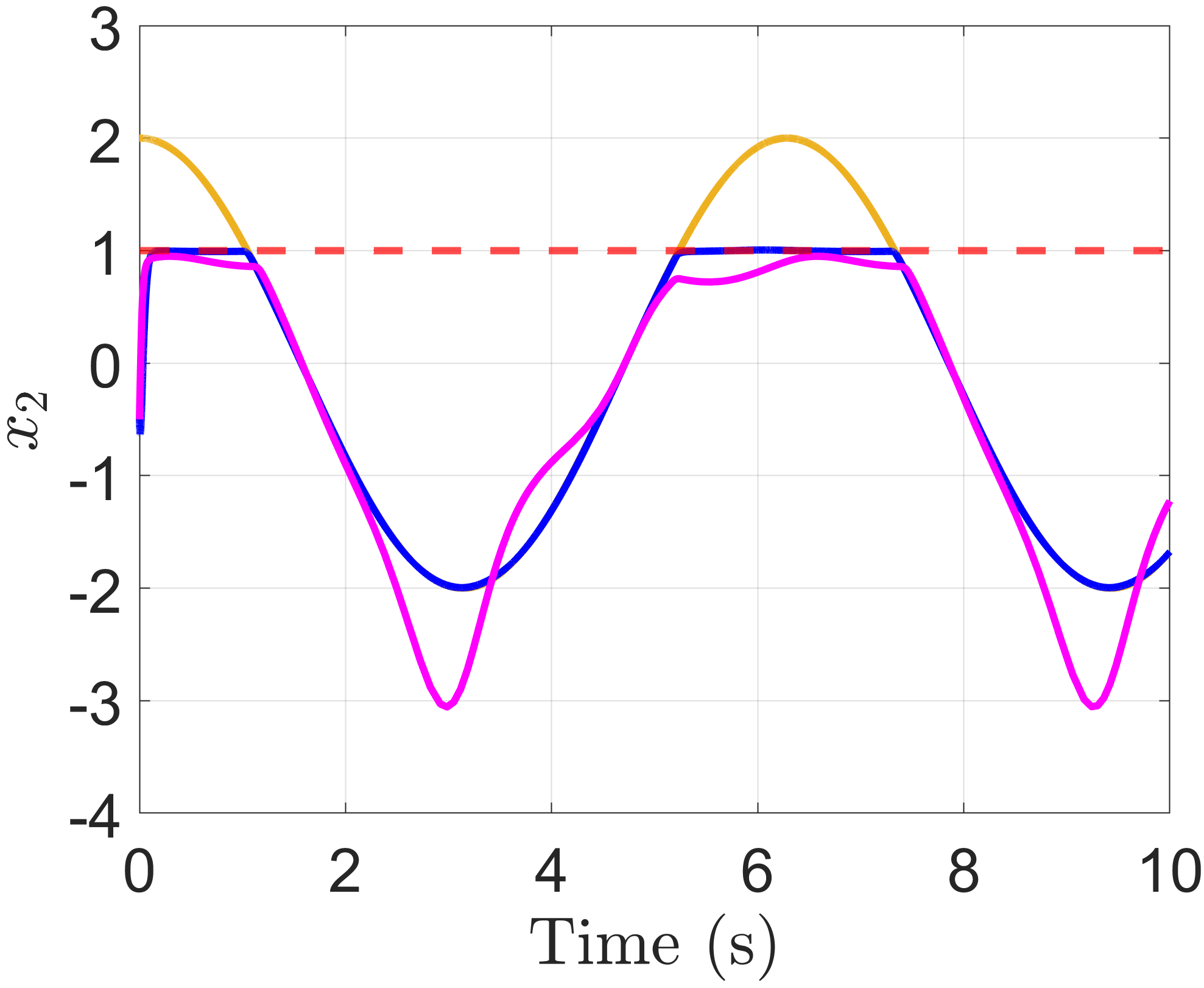}
\includegraphics[width=0.95\linewidth]{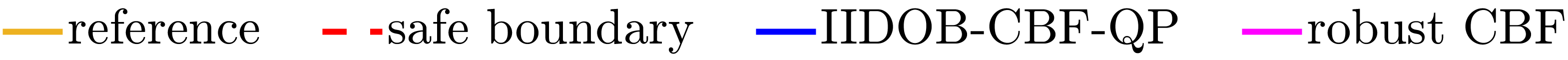}
\caption{Simulation results of the IIDOB-CBF-QP-based controller \eqref{cbfQP} and the robust CBF-based controller proposed in \cite{jankovic2018robust}. Both controllers can ensure safety, but controller \eqref{cbfQP} has better tracking performance inside the safe region.
}\label{fig:2}
\vskip -3mm
\end{figure}

\end{example}

\begin{example}
Consider a two-linked planar robot manipulator: 
\begin{equation}
    M(q)\ddot q+C(q,\dot q)\dot q+G(q)=\tau+J^\top(q)F_d(t),\label{eleqn}
\end{equation}
where $q=[q_1\ q_2]^\top$ is the joint angle, $\dot q=[\dot q_1\ \dot q_2]^\top$ is the joint angular velocity, $\tau\in\R^2$ is the control input, $F_d$ is the external disturbance satisfying Assumption \ref{assmp:d}, and $M\in\R^{2\times 2},C\in\R^{2\times 2},G\in\R^2$, and $J\in\R^{2\times 2}$ denote the inertia matrix, the Coriolis/centripetal matrix, the gravity term, and the Jacobian, respectively.

It can be seen \eqref{eleqn} can be expressed in the form of \eqref{controlaffine} with $x=[q_1\ q_2\ \dot q_1\ \dot q_2]^\top$, $f=[\dot q^\top \ -(C+G)^\top M^{-\top}]^\top$, $g=[0_{2\times 2}\ M^{-\top}]^\top$, and $p=[0_{2\times 2}\ JM^{-\top}]^\top$. It is obvious that $f,g,p$ are smooth functions. The expression of $M,C,G$ and physical parameters are chosen the same as those in \cite{sun2011neural}. 
Note that the Jacobian $J$ is singular when $q_2=0$ such that it is impossible to uniquely recover $F_d$  even if $\ddot q$ is available. The reference trajectory of $q$ is $q_d=[q_{1d}(t)\ q_{2d}(t)]^\top$, where $q_{1d}(t)=q_{2d}(t)=2\sin(t)$; the nominal controller is designed by following Proposition \ref{theorem:tracking}; the disturbance is selected as $F_d=[d_1\ d_2]^\top$ with $d_1=d_2=5\sin(t)+2\cos(2t)+4\sin(3t)+3\cos(4t)$ such that Assumption \ref{assmp:d} holds with $\om_0=11,\om_1=37$; four CBFs are selected as $h_1=q_1+1$, $h_2=-q_1+1.5$, $h_3=q_2+1.2$, and $h_4=-q_2+1$, which aim to ensure $-1\leq q_1\leq 1.5$ and $-1.2\leq q_2\leq 1$. It can be verified that the minimum IRD and the minimum DRD of system \eqref{eleqn} with respect to $h_i$, $i\in[4]$, are 2, i.e., $r_I=r_D=2$. 
We select the control parameters as $\ga=250, c=5, \ta=50, k_1=k_2=20, \alpha_1=\alpha_2=50, \rho=\tilde\rho=1, \la_0=25,\la_1=30,\la_2=50, T_1=250, T_2=1$ in Theorem \ref{theorem:dob} and \ref{theorem:cbf}. 

The simulation results are presented in Fig. \ref{fig:3}.  One can observe that the disturbance is precisely estimated by the proposed IIDOB \eqref{dob}, and the IIDOB-CBF-QP-based controller \eqref{cbfQP} can ensure the safety because the trajectories of $q_1,q_2$ remain within the safety sets.

\begin{figure}[!t]
\centering
\includegraphics[width=0.493\linewidth]{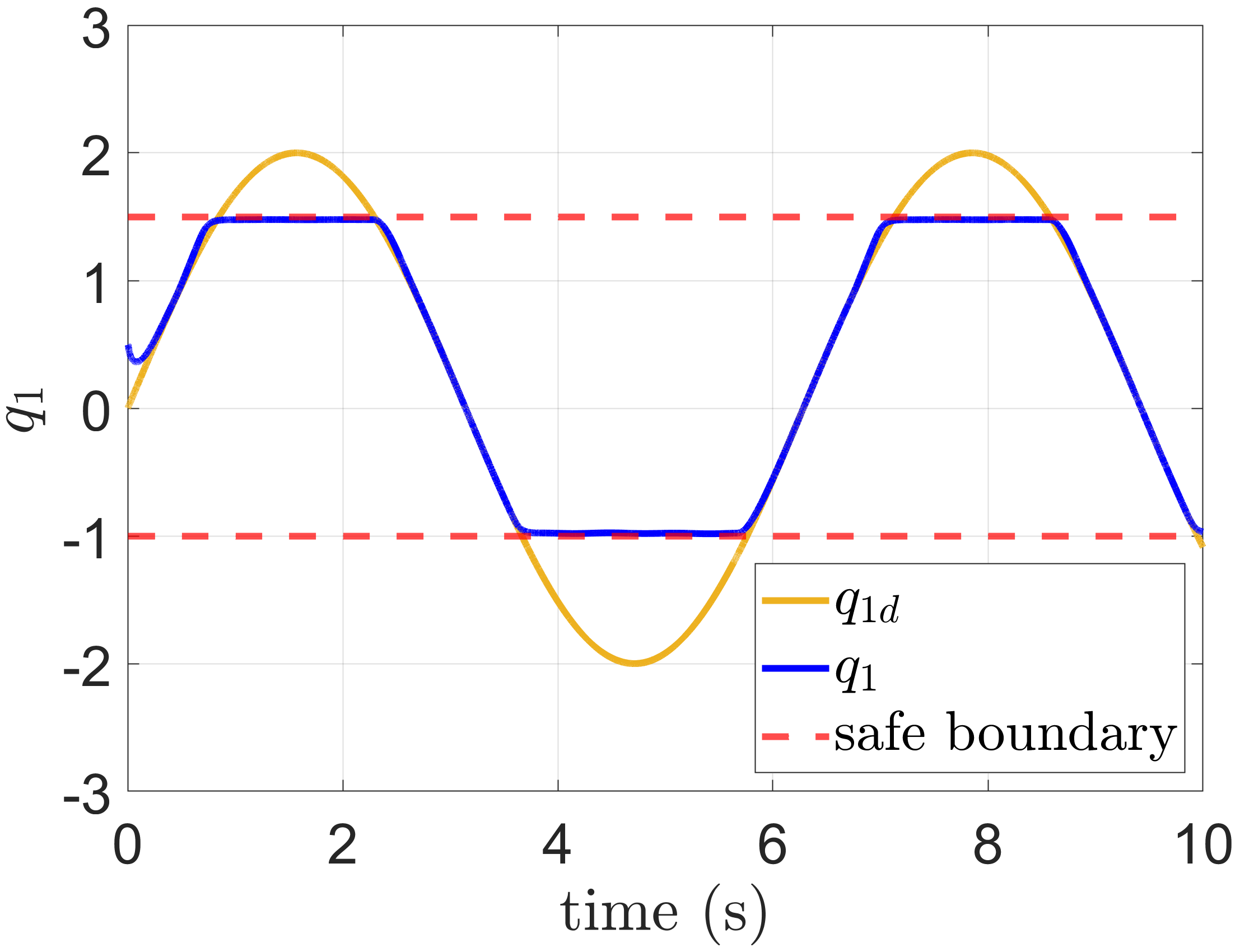}
\includegraphics[width=0.493\linewidth]{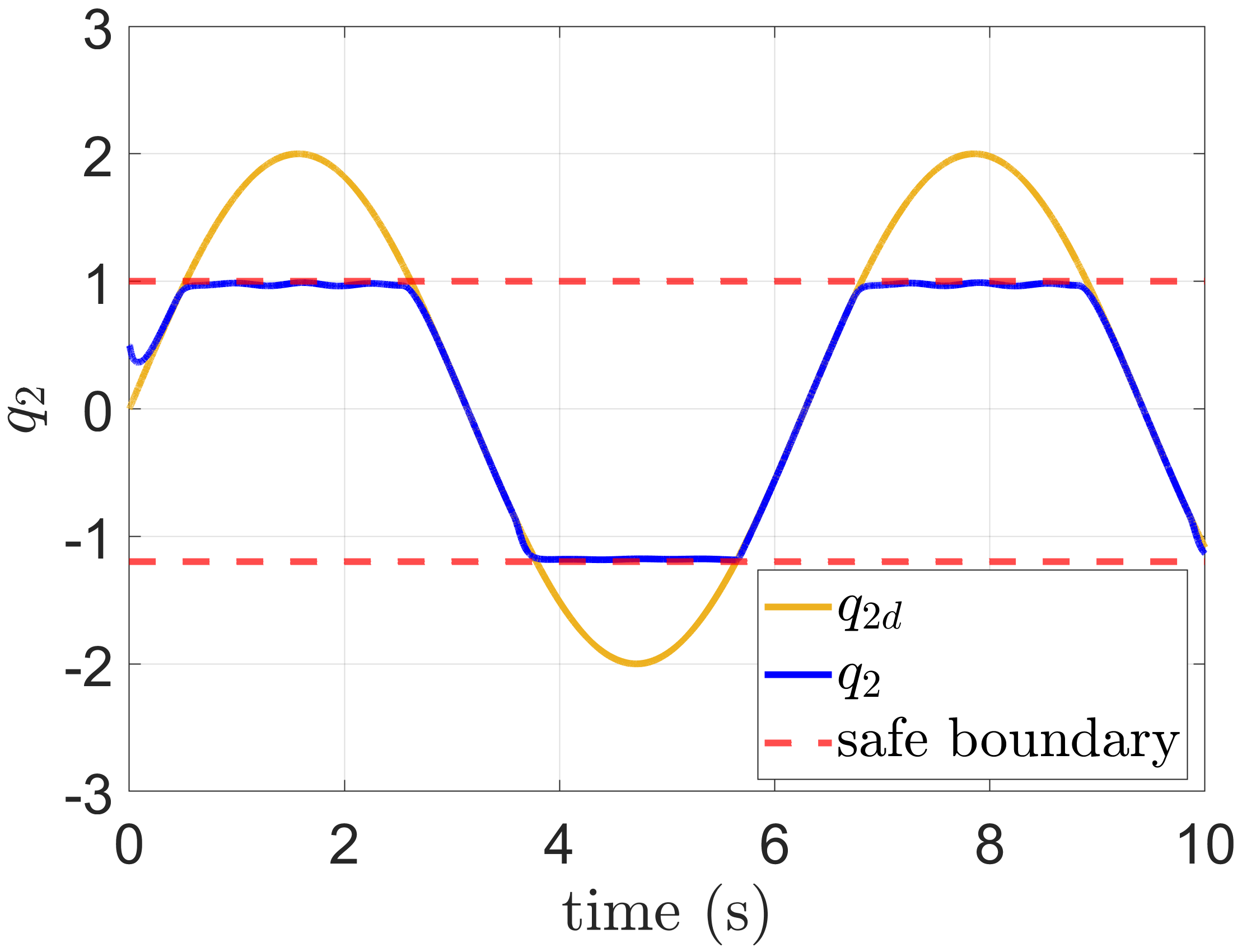}
\includegraphics[width=0.493\linewidth]{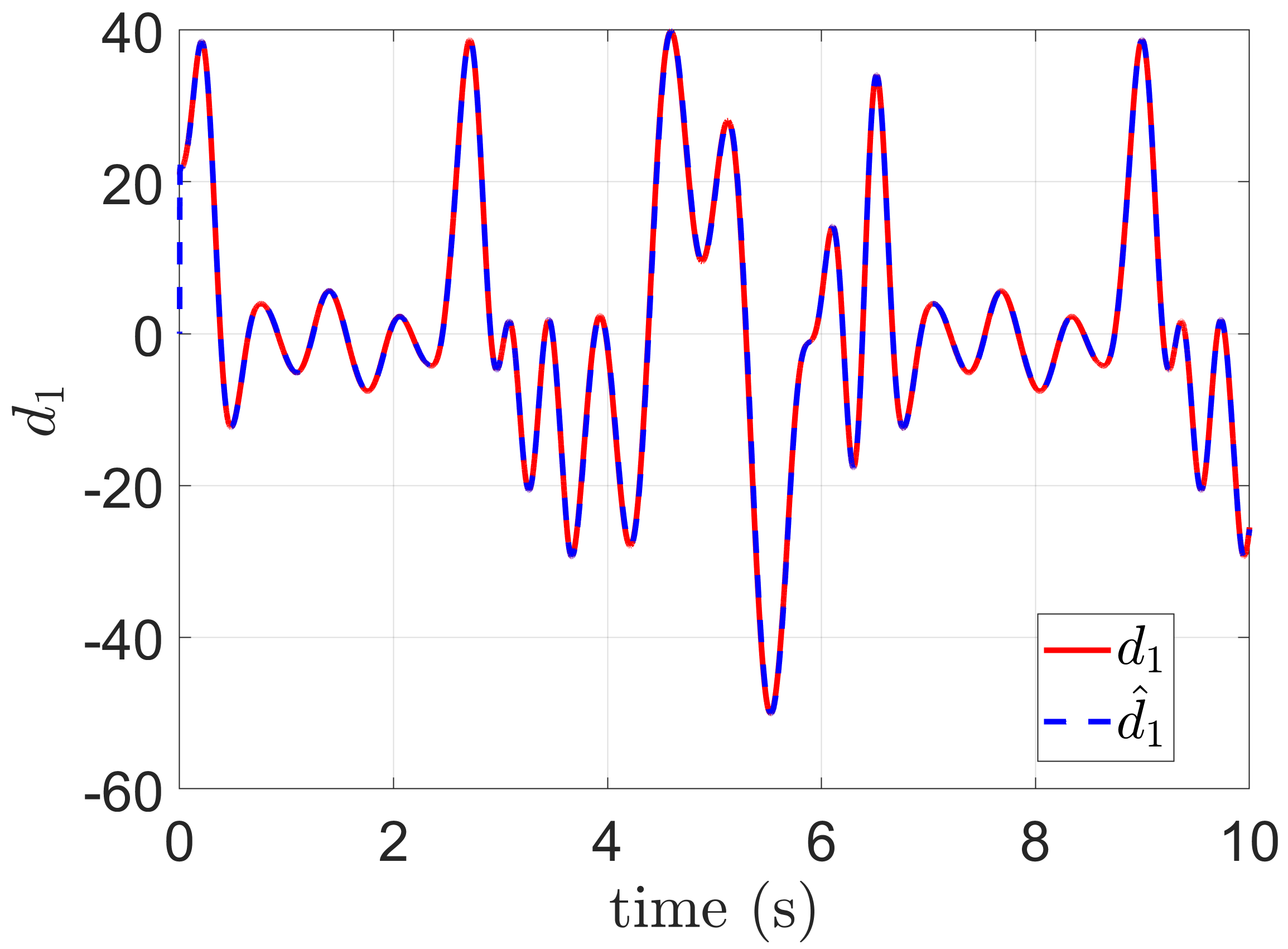}
\includegraphics[width=0.493\linewidth]{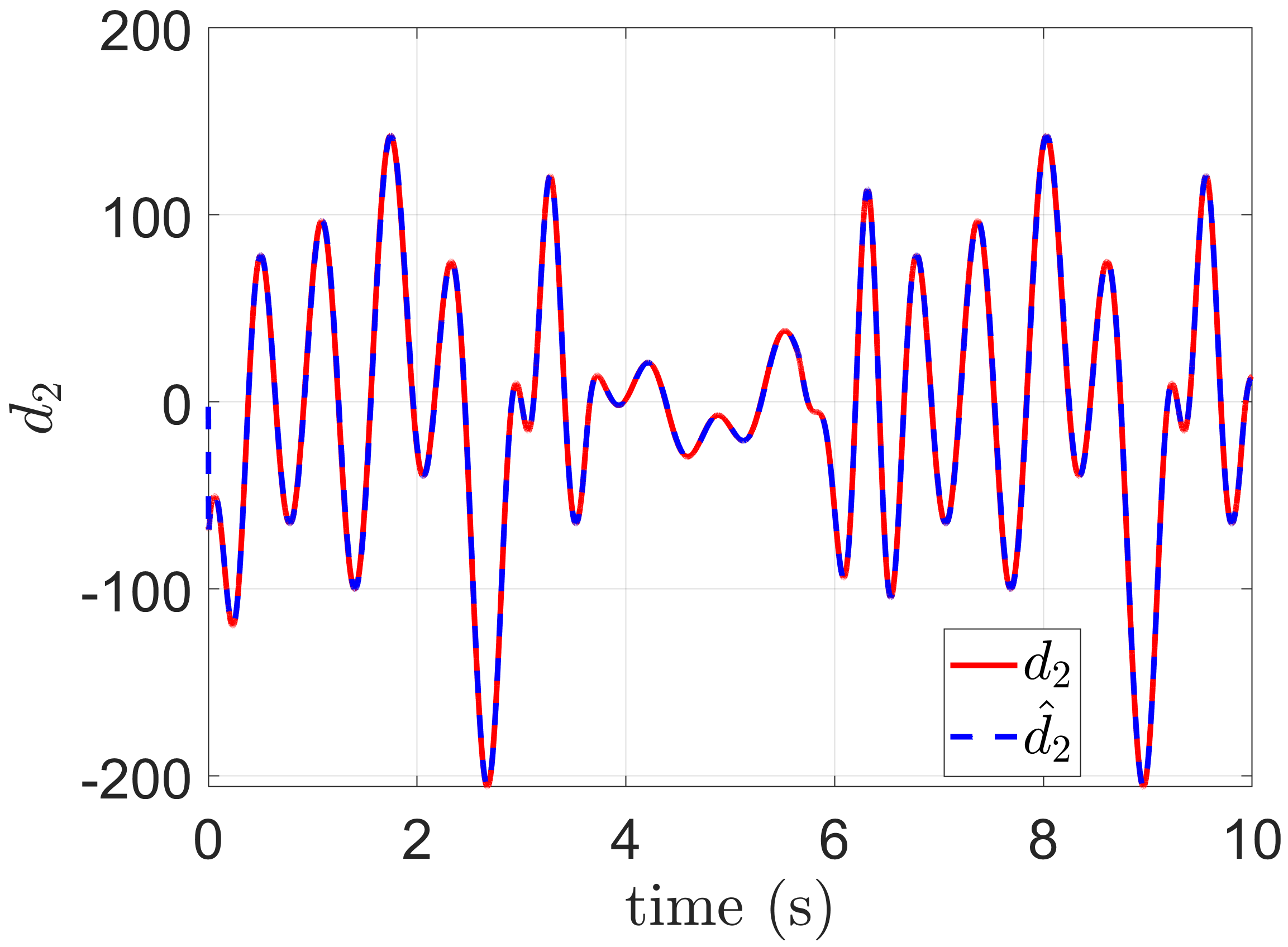}
\caption{Simulation results of the proposed IIDOB-CBF-QP-based safe control for the two-linked planer robot \eqref{eleqn}. One can see that the disturbance estimation is accurate and the safety is ensured. Moreover, the tracking performance of the nominal controller is well preserved inside the safe region. }\label{fig:3}
\end{figure}
\end{example}

\section{Conclusion}
\label{sec:conclusion}
This note introduced a systematic approach for designing IIDOB for general nonlinear control-affine systems without imposing restrictive assumptions employed by existing DOB design strategies.  
Based on that, a filter-based IIDOB-CBF-QP safe control design method was proposed. The numerical simulation results demonstrated the estimation accuracy achieved by the IIDOB and the superior performance of the proposed safe controller compared to the robust CBF-QP-based methods.
Future studies include extending the proposed method to hybrid systems and applying it to robotic systems such as bipedal robots subject to rigid impacts with the ground.

\bibliographystyle{IEEEtran}
\bibliography{DOBII.bib}

\end{document}